\documentclass[prx,aps,twocolumn,showpacs,floatfix]{revtex4-2}
\usepackage{amsmath}
\usepackage{amssymb}
\usepackage{cmap}
\usepackage[T1]{fontenc}
\usepackage[utf8]{inputenc}
\usepackage{lmodern}
\usepackage{amsfonts}
\usepackage{amsthm}
\usepackage{amsopn}
\usepackage{bm}
\usepackage{booktabs}
\usepackage{dcolumn}
\usepackage{graphicx}
\usepackage{mathtools}
\usepackage{tikz-cd}
\usepackage{xcolor}
\usepackage{hyperref}

\graphicspath{{../Figures/}{../../Figures/}}
\hypersetup{
colorlinks=true,
linkcolor=blue,
citecolor=blue,
urlcolor=blue,
linktoc=page
}
\newtheorem{theorem}{Theorem}[section]
\newtheorem{lemma}[theorem]{Lemma}
\newtheorem{proposition}[theorem]{Proposition}

\newcommand{\F}{\mathbb F_2}
\newcommand{\imop}{\operatorname{im}}
\newcommand{\kerop}{\operatorname{ker}}
\newcommand{\coker}{\operatorname{coker}}
\newcommand{\rank}{\operatorname{rank}}

\newcommand{\grade}{\operatorname{grade}}

\begin{document}

\title{Topological Codes from Space Groups: A Route beyond Translation Invariance}
\author{Chong-Yuan Xu$^{1}$}
\author{Ze-Chuan Liu$^{1}$}
\author{Yong Xu$^{1,2}$}
\email{yongxuphy@tsinghua.edu.cn}
\affiliation{$^{1}$Center for Quantum Information, IIIS, Tsinghua University, Beijing 100084, People's Republic of China}
\affiliation{$^{2}$Hefei National Laboratory, Hefei 230088, People's Republic of China}

\begin{abstract}
Translation invariance underlies all algebraic constructions of topological codes with geometrical locality.
It has remained an open question whether codes that generically break this invariance 
can still be topological and simultaneously possess geometrical locality. 
Resolving this question is important both fundamentally---deepening 
our understanding of topological phases---and practically, as relaxing 
translation invariance could vastly expand the design space and potentially reduce resource 
overhead in fault-tolerant architectures. 
Here we introduce space-group codes, in which crystallographic point-group operations 
enter the bulk stabilizer algebra; bivariate bicycle (BB) codes arise as the translation-only limit. 
The key insight is that the point-group orbit resolves topology and locality together: 
it yields a computable algebraic criterion for topological order and a folded geometry in which point-group operations become local. 
We identify space-group codes whose code parameters exceed the reported same-blocklength, same-check-weight BB benchmarks. 
In five parameter-matched neutral-atom comparisons, reflection codes reduce the optimized movement cost in every
case, by up to $60\%$, while folded placements also enable lower-overhead multilayer superconducting layouts. 
Treating spatial operations as a code-design variable therefore opens a route to topological codes jointly optimized for information protection and hardware geometry.
\end{abstract}
\maketitle

\setcounter{tocdepth}{2}
\tableofcontents

\section{Introduction}
\label{sec:introduction}

\begin{figure*}[htbp]
\includegraphics[width=\linewidth]{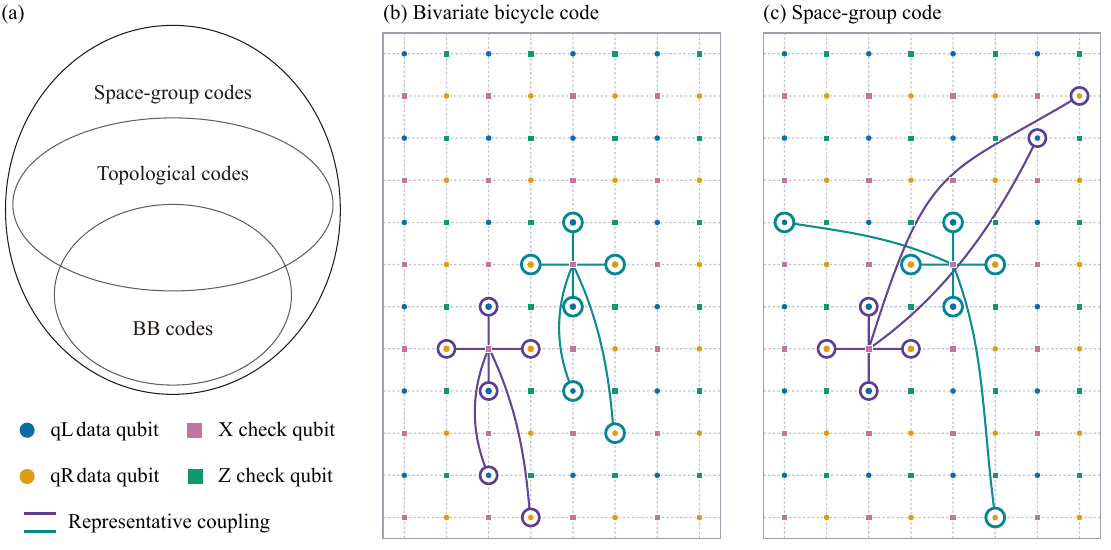}
\caption{
(a) Schematic organization of the code families considered in this work. 
Space-group codes form the broader design space, while BB codes constitute the translation-generated sector. 
Their overlap with the topological-code region represents topological BB codes, whereas the portion beyond 
the BB sector represents topological space-group codes enabled by non-translation space-group operations. 
(b),(c) Blue and orange circles denote the two data-qubit species, while magenta and green squares denote $X$ and $Z$ type checks, respectively. 
Light-gray edges show the nearest-neighbor couplings, and purple and teal edges highlight the neighborhoods of two representative checks. 
In the BB code, the purple check can be obtained by translating the green check; in the space-group code, however, this is not the case.
The BB code and the space-group code are defined by $(f,h)=(1+t_y^{-1}+t_y^{-2},1+t_x^{-1}+t_y^{-2})$ and
$(f,h)=(1+t_y^{-1}+s_yt_x^{-2},1+t_x^{-1}+s_xt_y^{-3})$ [see Eq.~(\ref{eq:stabilizergenerators})], respectively, 
where $t_x,t_y$ denote lattice translations and $s_x,s_y$ denote reflections.
}
\label{fig:framework}
\end{figure*}

The idea of topological quantum codes is rooted in topological 
order~\cite{wen1990topological,wen1993topological,levin2006detecting,wen2013topological} and was systematically 
formulated in Kitaev's seminal work~\cite{kitaev2002topological,kitaev2003fault}, 
first circulated in 1997 and later published as \textit{Fault-Tolerant Quantum 
Computation by Anyons}~\cite{kitaev2003fault}. 
By combining locally measurable stabilizer constraints with the nonlocal storage 
of quantum information, topological codes provide a natural route toward scalable 
fault-tolerant quantum computation~\cite{dennis2002topological,bombin2013introduction,fujii2015quantum}.
Specifically, in a scalable family of topological codes, logical operators span 
increasingly large spatial regions as system sizes grow, and the code distance increases accordingly.
When the physical error rate lies below the fault-tolerance threshold, 
the logical error rate is exponentially suppressed with the code distance.
Increasing the system size therefore provides a natural and direct scaling 
route toward fault-tolerant quantum computation~\cite{bombin2013introduction}.

Early studies of topological codes focus primarily on toric codes~\cite{kitaev2003fault,kitaev1997quantumcomputations,kitaev1997quantumerrorcorrection,joyner2004toric,castelnovo2013negativity,araujo2020pedagogical}, surface codes~\cite{bravyi1998quantum,fowler2012surface,horsman2012surface}, and color codes~\cite{steane1996error,bombin2006topological,landahl2011fault,kesselring2024anyon}.
More recent developments have introduced broader families, including bivariate bicycle (BB) codes~\cite{bravyi2024high,liang2025generalized,chen2025anyon,kobayashi2026generalized}, tile codes~\cite{liang2025planar,steffan2025tile,breuckmann2025logical}, 
and multivariate bicycle codes~\cite{voss2025multivariate,lin2026abelian,mian2026multivariate}, yielding practical finite-size instances with improved encoding rates and code distances.
A common organizing principle of these periodic algebraic constructions is 
that their bulk stabilizer structures are generated by lattice translations;
this naturally enforces geometrical locality and facilitates hardware implementation.
It raises a fundamental question: is translation invariance essential 
to the systematic construction of topological codes that preserve geometrical locality? 
At first glance, significantly relaxing translation invariance appears to threaten geometrical locality. 
Indeed, certain good quantum low-density parity-check (qLDPC) codes, 
such as quantum Tanner codes~\cite{leverrier2022quantum}, can exhibit the strong form of topological order~\cite{anshu2023nlts}
in terms of  No Low-Energy Trivial State~\cite{Freedman2014}, yet they lack geometrical locality.
The challenge, therefore, is to determine whether a systematic framework exists 
for constructing quantum codes beyond translation invariance that retain both 
topological order and geometrical locality.

In this work, we answer this question in the affirmative.
This work, to the best of our knowledge, provides the first systematic construction and certification, through a computable criterion for topological order, of a class of non-translation-invariant topological codes whose construction incorporates nontrivial crystallographic point-group operations while retaining geometric locality by orbifold.
We first construct a general class of Calderbank--Shor--Steane (CSS) stabilizer codes from permutation-group actions and then specialize the construction to symmorphic space groups.
Such a space group can be written as $G=T\rtimes P$, where $T$ is a lattice-translation group and $P$ 
is a crystallographic point group generated by finite-order rotations and reflections.
When $P$ is nontrivial, rotations and reflections generate space-group codes beyond the purely translation-invariant setting, as illustrated in Fig.~\ref{fig:framework}(c).
When $P$ is trivial, the construction reduces to the BB codes, as illustrated in Fig.~\ref{fig:framework}(b).
BB codes therefore arise as a special case of space-group codes within the present framework, as illustrated in Fig.~\ref{fig:framework}(a).
To characterize this enlarged family, we develop an algebraic formulation based on invariance theory.
On this basis, we establish a computable criterion for determining whether a 
space-group code satisfies the topological-order condition and extend Gr\"obner-basis methods to the computation of anyon-counting for space-group codes.

We further introduce an orbifold description that connects point-group actions to the physical spatial geometry of the codes.
At first glance, stabilizer terms involving reflections or rotations may appear to require long-range connections on the order of the system size. 
In orbifold coordinates defined by point-group identifications, however, the same group actions can acquire geometrically local representations, which lead to more local geometric operations.
In particular, reflection codes can be reorganized into open-boundary layouts in the same spatial dimension, such that couplings that appear long ranged in the periodic representation become local in the folded geometry.
Numerical searches further demonstrate the practical consequences of this enlargement.
We search and identify a collection of finite-size CSS codes with competitive code parameters beyond BB codes with the same number of physical qubits and stabilizer weight.
We also investigate the adaptation of space-group codes to two quantum-computing platforms: reconfigurable neutral-atom arrays and superconducting circuits.
Space-group codes and their orbifold descriptions effectively reduce the travel distance required for acousto-optic-deflector (AOD) transport, thereby helping preserve the available coherence-time budget.
For routing on superconducting circuits, the corresponding orbifold layouts substantially reduce the mean coupler length in the instances studied, which can ease physical implementation and improve
coupling fidelity.
Our results provide a concrete blueprint for future experiments and open a new avenue for the co-design of quantum codes and quantum-computing platforms.

The remainder of this paper is organized as follows.
Section~\ref{sec:background} reviews CSS codes and the required
preliminaries from group algebra, module theory, and commutative algebra.
Section~\ref{sec:construction} presents the general construction of CSS
codes from permutation-group actions.
Section~\ref{sec:topological-sg} specializes the construction to space
groups and develops the corresponding theory of geometric locality,
invariant modules, and topological order.
Section~\ref{sec:movement-illustration} translates the folded locality of
reflection codes into movement protocols for reconfigurable atom arrays.
Section~\ref{sec:numerical-results} presents numerical results on code
scaling, finite-size parameters, AOD transport,
and superconducting layouts.
Finally, Sec.~\ref{sec:discussion} summarizes the main implications and
scope of our results and discusses directions for future work.

\section{Background and notation}
\label{sec:background}

The central mathematical framework underlying this work is 
a correspondence between CSS codes and algebraic objects 
such as groups, rings, and modules.
In this section, we provide a brief review of the CSS stabilizer 
formalism and the requisite algebraic preliminaries.
For a more detailed introduction, we refer the readers to Refs.~\cite{nielsen2010quantum,atiyah2018introduction}. 

\subsection{CSS codes}
\label{subsec:css-codes}

CSS codes are a special class of stabilizer codes with a stabilizer group $\mathcal S$ 
generated by a set $S_X$ of $X$-type stabilizer generators and a set $S_Z$ of $Z$-type
stabilizer generators~\cite{steane1999enlargement,gottesman1997stabilizer,nielsen2010quantum}.
Such a code is determined by the chain complex~\cite{freedman2001projective,bombin2007homological,breuckmann2021quantum}
\begin{equation}
	\begin{tikzcd}[column sep=large,row sep=large]
		\F^{n_X}
		\arrow[r,"H_X^T"]
		&
		\F^{n}
		\arrow[r,"H_Z"]
		&
		 \F^{n_Z}
	\end{tikzcd},
	\label{eq:complex}
\end{equation}
together with its dual in Eq.~(\ref{eq:complexdual}).
Here, $H_X$ and $H_Z$ are $X$- and $Z$-type check matrices, respectively, 
$n_X$ ($n_Z$) denotes the number of $X$-type ($Z$-type) stabilizer generators, 
and $n$ is the number of physical qubits.
Note that the rows in $H_X$ and $H_Z$ are not necessarily linearly independent.
In addition, the requirement that every $X$-type generator commutes with every $Z$-type 
generator is equivalent to the chain-complex condition $H_ZH_X^T=0$.

A binary vector in $\F^{n}$ represents an $X$-type Pauli string 
operator, while a vector in $\F^{n_X}$ ($\F^{n_Z}$) corresponds to a product of 
$X$-type ($Z$-type) stabilizer generators. Let $H_X^T=\left(\begin{array}{cccc}
	u_1 & u_2 & \dots & u_{n_X}
\end{array}\right)$, where each column $u_i$ in $\F^{n}$ is the binary 
representation of the $i$th $X$-type generator. Then a vector $v_X$ in $\F^{n_X}$  
is mapped to an $X$-type Pauli operator $\sum_{i=1}^{n_X}[v_X]_i u_i$, so
the image $\imop(H_X^T)$ is exactly the group of $X$-type stabilizers generated by $S_X$. 
The syndrome of an $X$-type Pauli operator $v$ in $\F^{n}$ is obtained by 
applying the check matrix $H_Z$, yielding $H_Z v \in \F^{n_Z}$; the support of 
this syndrome indicates precisely which $Z$-type stabilizer generators are violated.
Consequently, the kernel $\kerop(H_Z)$ consists of all $X$-type Pauli operators 
that commute with every $Z$-type stabilizer. This subspace contains
$X$-type stabilizers and $X$-type logical operators.

The dual of the chain complex in Eq.~(\ref{eq:complex}) is given by
\begin{equation}
	\begin{tikzcd}[column sep=large,row sep=large]
		\F^{n_X}
		&
		\arrow[l,"{H_X}"']
		\F^{n}
		&
		\arrow[l,"{H_Z^T}"']
		\F^{n_Z}
	\end{tikzcd}.
	\label{eq:complexdual}
\end{equation}
The preceding analysis carries over directly upon interchanging the roles of 
$X$ and $Z$.

For an infinite system, a CSS code is topological in the algebraic sense 
if its chain complex and dual satisfy the middle exactness conditions~\cite{haah2013commuting}:
\begin{equation} \label{eq:topological-exactness-background}
	\begin{split}
		\imop(H_X^T) &= \kerop(H_Z),\\
		\imop(H_Z^T) &=\kerop(H_X).
	\end{split}
\end{equation}
They indicate that every finite-support Pauli operator 
that commutes with all check operators is itself generated by finite-support stabilizers.
In other words, the stabilizer generators must be local in the relevant physical or 
geometric metric, whereas all logical operators are nonlocal.
Recently, the conditions are used as the starting point for classification of anyons 
in generalized toric codes and BB codes~\cite{liang2025generalized}.

\subsection{Algebra}
\label{subsec:algebra}

In this subsection, we briefly review the algebraic language used 
repeatedly in this paper: free modules and group 
algebras~\cite{gallian2021contemporary,eisenbud1995commutative}.
Unless stated otherwise, all vector spaces are over $\F$.

Given a ring $\mathcal R$ and a positive integer $n$, we use $\mathcal R^n$ 
to denote a free $\mathcal R$-module of rank $n$~\cite{eisenbud1995commutative}:
\begin{equation}
	\mathcal R^n=\{(r_1,\ldots,r_n)^T\mid r_i\in\mathcal R\}.
	\label{eq:free-module}
\end{equation}
Scalar multiplication and addition for $R$-vectors in the module are defined as 
follows:
\begin{equation}
	\begin{split}
    &r(\dots,r_i,\dots)^T=(\dots,r r_i,\dots)^T,\\
    &(\dots,r_i,\dots)^T+(\dots,r'_i,\dots)^T=(\dots,r_i+r'_i,\dots)^T,
	\end{split}
\end{equation}
which are the same as the operations on vectors in a vector space with entries 
in a field.

For a group $G$, the group algebra $\F[G]$ is defined as~\cite{gallian2021contemporary}:
\begin{equation}
	\F[G]=\bigg\{\sum_{g\in G}\mu_g g\ \bigg|\ \mu_g\in\F\bigg\}.
	\label{eq:group-algebra}
\end{equation}
It is a vector space with a basis $G$ if only scalar multiplication and 
addition are considered. For the group algebra, multiplication for any two elements
\begin{equation}
	f_1=\sum_{g\in G}\mu_g g,\qquad
	f_2=\sum_{g'\in G}\lambda_{g'} g',
\end{equation}
in $\F[G]$ is defined as
\begin{equation}
	f_1f_2=\sum_{g,g'\in G}(\mu_g\lambda_{g'})(gg').
	\label{eq:group-algebra-product}
\end{equation}
The formal transpose of any element $f_1$ is defined as
\begin{equation}
	f_1^T=\sum_{g\in G}\mu_g g^{-1},
	\label{eq:formal-transpose}
\end{equation}
which constitutes the inversion anti-involution of the group algebra:
\begin{equation}
	(f_1f_2)^T=f_2^Tf_1^T,\qquad (f_1^T)^T=f_1.
	\label{eq:formal-transpose-anti-involution}
\end{equation}

\section{Code construction from symmetric groups}
\label{sec:construction}

In this section, we present how to construct CSS codes based on
symmetric groups.

\subsection{A guiding example}
\label{subsec:construction-guiding-example}

Let us first consider a simple example for illustration. Consider a symmetric group 
$S_4$ over a set  $A_{4}=\{a_1,a_2,a_3,a_4\}$
with four elements. Acting a permutation operation $g$ in $S_4$ on an element $a_i$ leads
to $g a_i=a_j$ so that the matrix representation $\Gamma(g)$ of $g$ is given by
\begin{equation}
	\Gamma(g)_{i' i}=\delta_{i' j}.
	\label{eq:permutation-matrix-entry}
\end{equation}
Choose two permutation elements from $S_4$, e.g., 
\begin{equation}
	g_1=(3,4,1,2),\qquad g_2=(2,1,4,3),
\end{equation}
whose matrix representations are given by
\begin{equation}
	\Gamma(g_1)=
	\begin{pmatrix}
		0&0&1&0\\
		0&0&0&1\\
		1&0&0&0\\
		0&1&0&0
	\end{pmatrix},
	\qquad
	\Gamma(g_2)=
	\begin{pmatrix}
		0&1&0&0\\
		1&0&0&0\\
		0&0&0&1\\
		0&0&1&0
	\end{pmatrix},
\end{equation}
respectively.
Setting $C=\Gamma(g_1)+\Gamma(g_2)$,
we define the check matrices as 
\begin{equation}
	H_X=H_Z=\left(C\mid C\right)=
	\begin{pmatrix}
		0&1&1&0&0&1&1&0\\
		1&0&0&1&1&0&0&1\\
		1&0&0&1&1&0&0&1\\
		0&1&1&0&0&1&1&0
	\end{pmatrix}.
	\label{eq:toy-check-matrix}
\end{equation}
We clearly see that
$
	H_XH_Z^T=0,
$
and equivalently, $H_ZH_X^T=0$, so it defines a CSS code.

\subsection{General case}
\label{subsec:operator-group}

Let $S_n$ be a symmetric group on a set $A_n=\{a_1,a_2,\dots,a_n\}$ 
(see Appendix~\ref{app:generate-set} using coset space of a group for code construction).
To construct a CSS code, we choose four group-algebra 
elements $f_X,h_X,f_Z,h_Z\in\F[S_n]$ and define the check matrices as
\begin{equation} \label{eq:gamma-block-maps}
	\begin{split}
	H_X&=\bigl(\Gamma(f_X)\mid \Gamma(h_X)\bigr),\\
	H_Z&=\bigl(\Gamma(f_Z)\mid \Gamma(h_Z)\bigr).
	\end{split}
\end{equation}

\begin{lemma}
	\label{thm:block-transpose}
	The transpose of the check matrices in Eqs.~(\ref{eq:gamma-block-maps}) are given by
\begin{equation}
	\begin{split}
	H_X^T&=\bigl(\Gamma(f_X^T)\mid \Gamma(h_X^T)\bigr)^T,\\
	H_Z^T&=\bigl(\Gamma(f_Z^T)\mid \Gamma(h_Z^T)\bigr)^T.
	\end{split}
\end{equation}
\end{lemma}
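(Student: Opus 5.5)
The plan is to reduce the statement to the single identity $\Gamma(f)^T=\Gamma(f^T)$ for every $f\in\F[S_n]$. The block statement then follows from the elementary rule for transposing a horizontally concatenated matrix. For any two $n\times n$ matrices, $(A\mid B)^T$ is the $2n\times n$ block column with $A^T$ on top of $B^T$. That column is exactly $\bigl(A^T\mid B^T\bigr)^T$, read as the transpose of the row of transposed blocks. Applying this with $A=\Gamma(f_X)$, $B=\Gamma(h_X)$, and likewise for the $Z$-type pair, gives both lines of the lemma once $\Gamma(f_X)^T=\Gamma(f_X^T)$ and its analogues are known.

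To prove the key identity, I first extend $\Gamma$ linearly to the group algebra by setting $\Gamma\bigl(\sum_g\mu_g g\bigr)=\sum_g\mu_g\Gamma(g)$. This is implicit in the paper's use of $\Gamma(f_X)$ and in the guiding example, where $C=\Gamma(g_1)+\Gamma(g_2)$. Since transposition is $\F$-linear, and by Eq.~(\ref{eq:formal-transpose}) $f^T=\sum_g\mu_g g^{-1}$, it suffices to check $\Gamma(g)^T=\Gamma(g^{-1})$ for a single permutation $g$.

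For one permutation, Eq.~(\ref{eq:permutation-matrix-entry}) says that $\Gamma(g)_{i'i}=1$ exactly when $g a_i=a_{i'}$. Hence $(\Gamma(g)^T)_{i'i}=\Gamma(g)_{ii'}=1$ exactly when $g a_{i'}=a_i$, which is equivalent to $g^{-1}a_i=a_{i'}$. That condition is precisely $\Gamma(g^{-1})_{i'i}=1$. An equivalent route is to note that $\Gamma$ is a homomorphism and that permutation matrices are orthogonal, so $\Gamma(g)^T=\Gamma(g)^{-1}=\Gamma(g^{-1})$.

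I expect no genuine obstacle. The only point needing care is the indexing convention in Eq.~(\ref{eq:permutation-matrix-entry}): one must confirm that $j$ is defined by $g a_i=a_j$, so that the transpose swaps $g$ with $g^{-1}$ rather than producing some other rearrangement. Once this is checked, linearity over $\F$ completes the argument.
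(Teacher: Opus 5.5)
Your proposal is correct and follows the paper's own proof: both rest on $\Gamma(g)^T=\Gamma(g^{-1})$ for a permutation matrix, extended linearly to $\Gamma(f)^T=\Gamma(f^T)$, after which the block statement follows at once; your explicit index check and linear extension simply spell out steps the paper leaves implicit. Your reading of the outer $^T$ as a blockwise transpose, which stacks the blocks into a column as in the paper's $(r_1,\ldots,r_n)^T$ convention, is the reading under which the lemma holds. Under the ordinary matrix transpose, $\bigl(\Gamma(f_X^T)\mid\Gamma(h_X^T)\bigr)^T$ would give the column with $\Gamma(f_X)$ over $\Gamma(h_X)$, which is not $H_X^T$ in general.
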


\begin{proof}
	For any group element $g\in S_n$, $\Gamma(g)$ is a permutation matrix, so
	\begin{equation}
		\Gamma(g^{-1})=\Gamma(g)^T.
	\end{equation}
	We thus have
	\begin{equation}
		\Gamma(f^T)=\Gamma(f)^T,\qquad \Gamma(h^T)=\Gamma(h)^T.
		\label{eq:gamma-formal-ordinary-transpose}
	\end{equation}
	Therefore, Eqs.~(\ref{eq:gamma-block-maps}) hold.
\end{proof}

The CSS commutation condition imposes the constraints on the group-algebra 
elements:

\begin{theorem}
	\label{thm:faithful-equivalence}
	Assume that the permutation matrices of $S_n$ on $A_n$ form a 
	faithful representation of the group algebra $\F[S_n]$.
	Then
	\begin{equation}
		f_Xf_Z^T+h_Xh_Z^T=0
	\end{equation}
	if and only if
$
		H_XH_Z^T=0.
$

\end{theorem}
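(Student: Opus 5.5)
The plan is to push the matrix identity $H_XH_Z^T=0$ back into the group algebra using that $\Gamma$ is an algebra homomorphism, and then use faithfulness to remove $\Gamma$. First, extend $\Gamma$ linearly from $S_n$ to $\F[S_n]$ by $\Gamma(\sum_g \mu_g g)=\sum_g\mu_g\Gamma(g)$. Check from the entry formula in Eq.~(\ref{eq:permutation-matrix-entry}) that $\Gamma(g g')=\Gamma(g)\Gamma(g')$: both sides send the basis vector of $a_i$ to that of $gg'a_i$. By bilinearity of the product in Eq.~(\ref{eq:group-algebra-product}), this gives $\Gamma(f_1f_2)=\Gamma(f_1)\Gamma(f_2)$ for all $f_1,f_2\in\F[S_n]$, and $\Gamma$ is clearly additive.

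Next, compute the block product. By Lemma~\ref{thm:block-transpose}, $H_Z^T$ is the vertical stack of $\Gamma(f_Z^T)$ over $\Gamma(h_Z^T)$. Block multiplication therefore gives
\begin{equation*}
H_XH_Z^T=\Gamma(f_X)\Gamma(f_Z^T)+\Gamma(h_X)\Gamma(h_Z^T)=\Gamma\bigl(f_Xf_Z^T+h_Xh_Z^T\bigr),
\end{equation*}
using the homomorphism property and additivity from the first step.

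Both directions now follow. If $f_Xf_Z^T+h_Xh_Z^T=0$, then $H_XH_Z^T=\Gamma(0)=0$; this direction needs no hypothesis. Conversely, if $H_XH_Z^T=0$, then $\Gamma(f_Xf_Z^T+h_Xh_Z^T)=0$. The faithfulness assumption, read as injectivity of the linear map $\Gamma$ on $\F[S_n]$, then forces $f_Xf_Z^T+h_Xh_Z^T=0$.

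The only delicate point is interpreting "faithful representation of the group algebra." Injectivity on group elements alone would not suffice, since distinct group elements can have linearly dependent permutation matrices. So the hypothesis must be read as injectivity of the linear extension, and I would state this explicitly. Everything else is routine bookkeeping.
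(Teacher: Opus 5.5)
Your proof is correct and follows the paper's argument: the paper likewise uses $\Gamma(g)\Gamma(k)^T=\Gamma(gk^{-1})$ (equivalently, that $\Gamma$ is multiplicative with $\Gamma(f^T)=\Gamma(f)^T$) to obtain $H_XH_Z^T=\Gamma(f_Xf_Z^T+h_Xh_Z^T)$, and it reads faithfulness as injectivity of the linear extension of $\Gamma$ to $\F[S_n]$, exactly as you do. Your remark that injectivity on group elements alone would not suffice is a useful clarification that the paper leaves implicit.
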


\begin{proof}
	For any group elements $g,k\in S_n$, we have 
	\begin{equation}
		\Gamma(k)^T=\Gamma(k^{-1}),\qquad
		\Gamma(g)\Gamma(k)=\Gamma(gk),
	\end{equation}
	and thus
	\begin{equation}
		\Gamma(g)\Gamma(k)^T=\Gamma(gk^{-1}).
	\end{equation}
	Based on these equations, we obtain that
	\begin{align}
		H_XH_Z^T
		&=\Gamma(f_X)\Gamma(f_Z)^T+\Gamma(h_X)\Gamma(h_Z)^T\notag\\
		&=\Gamma(f_Xf_Z^T+h_Xh_Z^T).
	\end{align}
	
	If $f_Xf_Z^T+h_Xh_Z^T=0$, then $H_XH_Z^T=0$.
	Conversely, faithfulness means that $\Gamma(f)=0$ implies $f=0$, so if $H_XH_Z^T=0$, faithfulness of $\Gamma$ gives $f_Xf_Z^T+h_Xh_Z^T=0$.
\end{proof} 

As an illustrative example, consider the toric code with $f_X=1+t_x$, $h_X=1+t_y$, $f_Z=1+t_y^{-1}$, 
and $h_Z=1+t_x^{-1}$, where $t_x$ and $t_y$ are translation operators along the $x$ 
and $y$ directions, respectively. With these definitions, we have 
$f_Xf_Z^T+h_Xh_Z^T=0$ so that the code is a valid CSS code.

\section{Topological space-group codes}
\label{sec:topological-sg}

We now consider $A$ as an infinite lattice and use space group 
$G=T \rtimes P$ that combines lattice translations with 
finite point-group operations to construct topological codes.
The goal of this section is to analyze the locality and topology of 
the space-group codes using ring-module algebra.

\subsection{Setups}
\label{subsec:setups}

We first construct the infinite space on which the space group acts.
Let
\begin{equation}
	\mathcal R_0=\F[x_1^{\pm1},\ldots,x_D^{\pm1}]
\end{equation}
be the Laurent polynomial ring in $D$ variables.
Let $A$ be the set of all Laurent monomials in $\mathcal R_0$.
The identification
\begin{equation}
	x^z=x_1^{z_1}\cdots x_D^{z_D},\qquad z=(z_1,\ldots,z_D)\in\mathbb Z^D
	\label{eq:laurent-monomial}
\end{equation}
identifies $A$ with the infinite integer lattice $\mathbb Z^D$.
An element $q\in\mathcal R_0$ has the finite form
$
	q=\sum_{z\in\mathbb Z^D}c_zx^z
$
with $c_z\in\F$ and only finitely many nonzero coefficients.
An element $p\in\mathcal R_0^2$ is written as a block vector
\begin{equation}
	p=
	\begin{pmatrix}
		q_1\\
		q_2
	\end{pmatrix},
	\qquad q_1,q_2\in\mathcal R_0.
\end{equation}

The translation group 
$
	T\cong\mathbb Z^D.
$
For $b\in\mathbb Z^D$, the corresponding translation acts on $q=\sum_zc_zx^z$ by
$
	t_b(q)=\sum_zc_zx^{z+b}.
$
An element $p_\rho \in P$
acts on $q=\sum_zc_zx^z$ by
\begin{equation}
	p_\rho(q)=\sum_zc_zx^{\rho z},
	\label{eq:point-group-action-r0}
\end{equation}
where $\rho$ is its matrix representation.
In the Euclidean interpretation of a crystallographic point group, the matrices $\rho$ also preserve a lattice metric, but the algebraic construction below only uses the integral finite-order action on $\mathbb Z^D$.

For the space group $G$, we write each element as 
$g=t_b p_\rho$ with $b\in\mathbb Z^D$ and $\rho\in P$. Acting it on $q$ leads to
\begin{equation}
	g(q)=\sum_z c_z x^{\rho z+b}.
	\label{eq:affine-action}
\end{equation}
We define the action of a group-algebra element
$
	f=\sum_{g\in G}\mu_g g\in\F[G]
$
on $q$ by linear extension as 
\begin{equation}
	\Gamma (f)(q)=\sum_{g\in G}\mu_g \,g(q).
	\label{eq:group-algebra-action-r0}
\end{equation}

\begin{theorem}[Faithfulness of the space-group algebra action]
	\label{thm:space-group-faithful}
	The affine lattice action of $G=T\rtimes P$ on the Laurent monomial 
	basis induces an injective group-algebra representation.
	Thus the faithful representation used below is the faithful 
	action of the full space-group algebra.
\end{theorem}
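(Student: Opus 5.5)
Our plan is to reduce injectivity of $\Gamma$ on $\F[G]$ to faithfulness of the affine action of $G$ on $\mathbb Z^D$ together with a trick that separates distinct group elements by evaluating on a single, suitably chosen monomial.

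\textbf{Step 1: faithfulness on group elements.} We first show that distinct elements $g=t_bp_\rho$ and $g'=t_{b'}p_{\rho'}$ act differently on $A\cong\mathbb Z^D$. Suppose $\rho z+b=\rho' z+b'$ for all $z\in\mathbb Z^D$. Setting $z=0$ gives $b=b'$. Setting $z=e_i$ for each standard basis vector then gives $\rho e_i=\rho' e_i$, hence $\rho=\rho'$. The point group $P$ is given through its integral matrix representation, so $\rho$ determines $p_\rho$. Therefore $g=g'$.

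\textbf{Step 2: a separating monomial for a finite set.} Now let $f=\sum_g\mu_g g\in\F[G]$ be nonzero and let $S=\{g:\mu_g\neq0\}$, a finite set. We want a single lattice point $z_0$ such that the images $g(z_0)$, for $g\in S$, are pairwise distinct. Then
\[
\Gamma(f)(x^{z_0})=\sum_{g\in S}x^{g(z_0)}
\]
is a sum of distinct monomials, hence nonzero, and $\Gamma$ is injective.

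To find $z_0$, consider each pair $g\neq g'$ in $S$. The set
\[
E_{g,g'}=\{z:(\rho-\rho')z=b'-b\}
\]
is an affine sublattice. By Step 1 it is a proper subset of $\mathbb Z^D$: if $\rho=\rho'$, then $b\neq b'$ and $E_{g,g'}$ is empty; otherwise $E_{g,g'}$ lies in a proper affine subspace of $\mathbb R^D$. A finite union of proper affine subspaces cannot cover $\mathbb Z^D$. To see this, pick a line $z=z_1+k\,w$ with direction $w$ not parallel to any of these subspaces. Each subspace meets the line in at most one point, so all but finitely many $k$ avoid every $E_{g,g'}$. Choosing such a $k$ gives the desired $z_0$.

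\textbf{Main obstacle and conclusion.} The only genuine difficulty is Step 2, namely avoiding a finite union of proper affine subspaces with an integer point. We will handle it with the line/generic-direction argument above. A fallback is to take $z_0=(N,N^2,\dots,N^D)$ with $N$ large: each nonzero affine equation in $z$ then becomes a nonzero polynomial equation in $N$, which has only finitely many roots. Once $z_0$ is chosen, the monomial computation in Step 2 shows that $\Gamma$ restricted to $\F[G]$ has trivial kernel. This is exactly the faithfulness hypothesis required in Theorem~\ref{thm:faithful-equivalence}, so that theorem applies to space-group codes.
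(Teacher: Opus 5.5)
Your proposal is correct and follows essentially the same route as the paper. Both arguments reduce injectivity to finding a lattice point $z_0$ outside the finitely many pairwise coincidence sets $\{z:(\rho-\rho')z=b'-b\}$, which are empty when $\rho=\rho'$, and both use the $N$-power trick to turn each nonzero row equation into a nonzero polynomial in $N$. The only nuance is that your main generic-direction argument needs an integer $w$ with $(\rho-\rho')w\neq0$ for every pair; the paper secures this with $v=(1,N,\dots,N^{D-1})$, and your fallback $z_0=(N,N^2,\dots,N^D)$ provides it directly.
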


\begin{proof}
	For any $f_1,f_2\in\F[G]$, the matrix representation of $f_1+f_2$, $f_1$, and $f_2$
	satisfy that
	\begin{equation}
		\Gamma(f_1+f_2)=\Gamma(f_1)+\Gamma(f_2).
		\label{eq:gamma-module-homomorphism}
	\end{equation}
	Faithfulness means that $\Gamma$ is injective, which is equivalent to the kernel test
	\begin{equation}
		f\ne0\quad\Longrightarrow\quad \Gamma(f)\ne 0.
		\label{eq:faithful-kernel-test}
	\end{equation}
	Let
	$
		f=\sum_{i} \mu_i g_i\in \F[G]
	$
	be a nonzero group-algebra element, written with distinct group elements and nonzero coefficients.
	Writing $g_i=t_{b_i}p_{\rho_i}$, we have
	\begin{equation}
		\Gamma(f)x^z=\sum_{i} \mu_i x^{\rho_i z+b_i}.
		\label{eq:gamma-f-on-monomial}
	\end{equation}
	If the exponent vectors $\rho_i z+b_i$ are pairwise distinct, then the right-hand side is a 
	nonzero linear combination of distinct Laurent monomial basis vectors and therefore is nonzero.
	Thus it is enough to find one lattice point $z_0$ at which no two terms coincide.
	Equivalently, if $\Gamma(f)x^z=0$, then for the $i$th term, it is either zero or there exists 
	at least one other term $j\ne i$ with
	\begin{equation}
		\rho_i z+b_i=\rho_j z+b_j.
	\end{equation}
	This is because the coefficient of each Laurent monomial basis vector must vanish in $\F$.
	
	For every pair $i\ne j$, we define
	\begin{equation}
		\omega_{ij}=\rho_i-\rho_j,\qquad c_{ij}=b_j-b_i .
	\end{equation}
The $i$th and $j$th terms coincide at $z$ precisely when
\begin{equation}
	\omega_{ij}z=c_{ij}.
	\label{eq:collision-linear-equation}
\end{equation}
	Let
	\begin{equation}
		C_{ij}=\{z\in\mathbb Z^D\mid \omega_{ij}z=c_{ij}\}
	\end{equation}
	be the corresponding coincidence set.
	We need to show that the finite union $\bigcup_{i<j}C_{ij}$ does not fill the whole lattice.

	If $\omega_{ij}=0$, then $\rho_i=\rho_j$.
	Because the group elements $(b_i,\rho_i)$ and $(b_j,\rho_j)$ are distinct, we have $b_i\ne b_j$, hence $c_{ij}\ne0$.
	In this case, $C_{ij}$ is empty.
	It remains to treat the pairs with $\omega_{ij}\ne0$.
	Choose a vector $v$ such that
	\begin{equation}
		\omega_{ij}v\ne0
	\end{equation}
	for every nonzero matrix $\omega_{ij}$ that occurs.
	Such a $v$ exists.
	Indeed, for each nonzero $\omega_{ij}$, choose one nonzero row $w_{ij}$.
	Take
	\begin{equation}
		v=(1,N,N^2,\ldots,N^{D-1}).
	\end{equation}
	Then $w_{ij}\cdot v$ is a nonzero polynomial in $N$, so it has only finitely many integer roots.
	Since there are only finitely many pairs, an integer $N$ can be chosen so that $w_{ij}\cdot v\ne0$ for all selected rows.
	Then $\omega_{ij}v\ne0$ for all nonzero $\omega_{ij}$.
	
	Now restrict to an integer line
	\begin{equation}
		L=\{tv\mid t\in\mathbb Z\}.
	\end{equation}
	For a fixed pair with $\omega_{ij}\ne0$, the coincidence equation on this line is
	\begin{equation}
		t\omega_{ij}v=c_{ij}.
	\end{equation}
	This equation has at most one integer solution $t$: if both $t$ and $t'$ solved it, then
	\begin{equation}
		(t-t')\omega_{ij}v=0,
	\end{equation}
	and $\omega_{ij}v\ne0$ forces $t=t'$.
	Thus each coincidence set $C_{ij}$ intersects $L$ in at most one point.
	There are only finitely many pairs, so the finite union $\bigcup_{i<j}C_{ij}$ intersects $L$ in only finitely many points.
	Since $L$ contains infinitely many lattice points, we can choose $z_0=t_0v\in L$ outside this union.
	For this $z_0$, Eq.~\eqref{eq:collision-linear-equation} cannot be satisfied, 
	so the exponent vectors $\rho_i z_0+b_i$ are pairwise distinct.
	
	Applying Eq.~\eqref{eq:gamma-f-on-monomial} at $z_0$ gives
	\begin{equation}
		\Gamma(f)x^{z_0}=\sum_{i} \mu_i x^{\rho_i z_0+b_i}\ne0,
	\end{equation}
	because the monomials appearing on the right are distinct basis vectors of $\mathcal R_0$ and every $\mu_i$ is nonzero.
	Thus $\Gamma(f)\ne0$ for every nonzero $f\in\F[G]$.
	By Eq.~\eqref{eq:faithful-kernel-test}, $\Gamma$ is faithful as a representation of the full space-group algebra $\F[G]$.
\end{proof}

\subsection{Locality}
\label{subsec:sg-locality-main}

A topological code requires its stabilizer generators to be local in some physical geometry.
For ordinary BB codes, this geometry is usually the periodic translation lattice itself.
For space-group codes, however, stabilizer terms may also contain point-group operations, and we therefore need a natural local geometry in which these point-group operations are also local.
Our approach is to introduce a folded lattice: The original lattice cells are grouped into folded supercells according to the orbits of the point group $P$, and locality is measured by the check-data coupling distance in this folded geometry.
As we explain below, the folded lattice can be realized inside a Euclidean space.

We first use reflection codes as the main example.
In one dimension, the reflection-code space group is generated by a translation $t_x$ and a reflection $s_x$.
In two dimensions, the reflection-code space group is generated by translations $t_x,t_y$ and reflections $s_x,s_y$, where
\begin{equation}
	s_x:x^ay^b\mapsto x^{-a}y^b,\quad s_y:x^ay^b\mapsto x^ay^{-b}.
\end{equation}
We now show that, for one- and two-dimensional reflection codes, both reflection and translation operations can be embedded as local geometric operations in a Euclidean space.

\begin{figure*}[htbp]
	\centering
	\includegraphics[width=\textwidth]{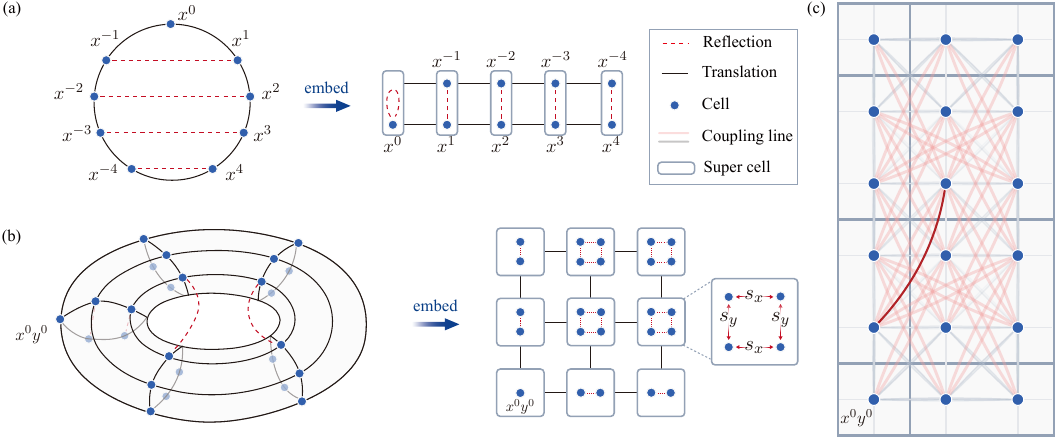}
	\caption{
		Folded locality for reflection and space-group codes.
		(a) A one-dimensional reflection example. Periodic cells are grouped into reflection orbits
		$A_0=\{x^0\}$ and $A_a=\{x^a,x^{-a}\}$, producing a one-dimensional open-boundary folded lattice.
		(b) A two-dimensional reflection example.
		A periodic square lattice is folded by $s_x$ and $s_y$ into supercells given by reflection orbits.
		Interior supercells contain four original cells, mirror-axis supercells contain two cells, and the origin contains one cell.
		(c) A practical folded-layout locality example for the reflection code
		$f=t_xs_y+s_xs_yt_y+s_xt_xs_yt_y$ and
		$h=t_x+s_xt_y^5+s_xt_xt_y$.
		Each cell contains two data qubits and two check qubits, placed on two layers for syndrome extraction.
		The highlighted edge is a longest coupling edge in the folded layout, and its length defines the folded locality.
	}
	\label{fig:embedding}
\end{figure*}

Figure~\ref{fig:embedding}(a) shows the folding of a one-dimensional reflection code.
The original periodic cells are labeled by $x^{-4},x^{-3},\ldots,x^{-1},x^0,x^1,\ldots,x^4$.
The reflection $x^m\mapsto x^{-m}$ partitions these cells into orbits
\begin{equation}
	A_0=\{x^0\},\quad A_a=\{x^a,x^{-a}\},\quad a=1,\ldots,4.
\end{equation}
Each orbit defines a folded supercell.
The fixed point $x^0$ forms a one-cell supercell, while all other cells are paired into two-cell supercells.
Thus the original periodic chain is embedded as a one-dimensional open-boundary folded lattice.
In this folded lattice, reflection exchanges internal labels within a supercell, while translation acts either within a supercell or between neighboring supercells.

Figure~\ref{fig:embedding}(b) shows the folding of a two-dimensional reflection code.
For a $5\times5$ periodic square, the original cells are labeled by $x^ay^b$.
The two reflections $s_x,s_y$ group the cells into orbits
\begin{equation}
	C_{a,b}=\{x^{\epsilon a}y^{\eta b}\mid \epsilon,\eta=\pm 1\}.
\end{equation}
If $a=0$ or $b=0$, the supercell lies on a mirror axis and contains only two cells.
The origin $x^0y^0$ is fixed by both reflections, and hence contains only one cell.
The two-dimensional periodic lattice is represented as a two-dimensional open-boundary folded lattice.
This folding places reflection-related cells in the same supercell.

The same construction applies to general space-group codes.
Let $G=T\rtimes P$.
We group the original lattice cells by $P$-orbits:
\begin{equation}
	\lambda\sim p\lambda,\quad p\in P.
\end{equation}
Each orbit $[\lambda]$ defines a folded supercell.
Point-group operations become internal label permutations within folded supercells, while translations become moves between folded supercells.

For a $D$-dimensional reflection code, this folded lattice embeds directly into a $D$-dimensional open-boundary Euclidean space.
At the continuous level, coordinate reflections give the quotient
\begin{equation}
	\mathbb R^D/(\mathbb Z_2)^D\cong [0,\infty)^D.
\end{equation}
Equivalently, each coordinate direction is folded from a two-sided direction into a half-axis.
For finite lattices, we use the corresponding discrete folded coordinates.
More general finite point groups need not embed into a same-dimensional open-boundary Euclidean region; however, prior work shows that orbit spaces of finite group actions admit embeddings into some higher-dimensional Euclidean space with bounded metric distortion~\cite{cahill2025group}.
Thus general space-group codes also admit a natural orbifold locality interpretation.

We now give a practical locality criterion for searching for more local reflection codes on a two-dimensional folded lattice.
For a finite code, we place data qubits on the layer $z=0$ and check qubits on the layer $z=1$.
We define the folded locality as the maximum three-dimensional Euclidean length among all check-data coupling edges:
\begin{equation}
	R_{\rm fold}=\max_{e\in E(H_X)\cup E(H_Z)}\|p_C(e)-p_D(e)\|_2.
\end{equation}
Here $E(H_X)\cup E(H_Z)$ denotes the set of nonzero check-data couplings in the final parity-check matrices.
Figure~\ref{fig:embedding}(c) shows an example of a $[[36,4,6]]$ reflection code with stabilizer terms
\begin{equation}
	\begin{cases}
		f=t_xs_y+s_xs_yt_y+s_xt_xs_yt_y,\\
		h=t_x+s_xt_y^5+s_xt_xt_y .
	\end{cases}
\end{equation}
A folded supercell contains one, two, or four original cells, depending on whether it lies at the origin, on a mirror axis, or in the interior.
Each cell contains two data qubits and two check qubits, placed on two layers for syndrome extraction.
For simplicity, we do not consider the details inside the cell.
The highlighted edge is one of the longest check-data couplings, and its length is the metric of the code locality.

The key advantage of reflection codes over ordinary BB codes is that $s_x$ and $s_y$ are local nearest-neighbor or intra-supercell operations in the folded lattice.
Thus the monomial search space of reflection codes contains many folded-local reflection-assisted terms.
This enlarges the geometric freedom: at the same stabilizer weight, reflection codes have more ways to place algebraic couplings within a short folded-lattice distance.
Consequently, without degrading the code parameters, reflection codes are more likely to yield codes with smaller folded locality.

\subsection{A guiding example}
\label{subsec:guiding-example}

We now study a concrete two-dimensional example by setting
$D=2$ and taking $P$ to be $C_{2v}$, 
generated by two reflections $s_x$ and $s_y$.
Let $t_x$ and $t_y$ denote the translation generators along the $x$ and $y$ directions, 
respectively, so that
$
	G=\langle t_x,s_x,t_y,s_y\rangle .
$
We choose 
\begin{equation}
	\begin{split}
\Gamma_X&=\left(f_X \mid h_X\right)  \\
&=\left(t_x^{-1}s_x t_y^3s_y+t_xs_xt_y^{-1}s_y \mid t_x^{-1}s_xt_y^2s_y+t_xs_xs_y \right)  \\ 
\Gamma_Z &= \left(f_Z \mid h_Z \right)  \\
&= \left( s_yt_y^{-2}s_xt_x+s_ys_xt_x^{-1} \mid s_yt_y^{-3}s_xt_x+s_yt_ys_xt_x^{-1} \right) 
	\end{split}
\end{equation}
to construct a CSS code together with their transposes $\Gamma_X^T$ and $\Gamma_Z^T$. 
One can easily check that 
this is a valid CSS code based on Theorem~\ref{thm:faithful-equivalence}.

For the BB codes or their generalized version, $\Gamma_X^T$, $\Gamma_Z^T$, $\Gamma_X$, 
and $\Gamma_Z$ involve only translation operators and are therefore module homomorphisms.
Consequently, one can define the following chain complex for modules  
\begin{equation}
	\begin{tikzcd}[column sep=large,row sep=large]
		\mathcal R_0
		\arrow[r,"\Gamma_X^T"]
		&
		\mathcal R_0^2
		\arrow[r,"\Gamma_Z"]
		&
		\mathcal R_0
	\end{tikzcd}.
	\label{eq:BB-complex}
\end{equation}
This complex then provides a framework for analyzing the topological 
conditions that characterize the resulting family of BB codes.

A key distinction from the BB-code setting is that our construction 
incorporates point group operations, such as reflections,
in addition to translations.
This raises the question of whether the chain complex in 
Eq.~(\ref{eq:BB-complex}) can still be applied directly.
The answer is negative: generically, $\Gamma_X^T$, $\Gamma_Z^T$, $\Gamma_X$, 
and $\Gamma_Z$ are not module homomorphisms. 
Indeed, consider the reflection $s_x$, acting on the Laurent polynomial ring.
We have $s_x(x\cdot x^2)=x^{-3}$, whereas $x\,s_x(x^2)=x^{-1}$,
confirming that the reflection action fails to preserve the module structure.

To resolve this issue, we introduce a lifted module description over 
an invariant coefficient ring, which we illustrate using the preceding example.
We first define an invariant subring $\mathcal R= \F[u,v]$ of the two 
reflections, where 
\begin{equation} \label{eq:uv-invariants}
	u=x+x^{-1},\qquad v=y+y^{-1},
\end{equation}
are the invariant polynomials under the two reflections. 
Higher powers of $x$ and $y$ reduce to zeroth or first powers 
with coefficients in $\mathcal R$.
Consequently, every $f\in\mathcal R_0$ admits a decomposition of the form
\begin{equation}
f=f_1(u,v)+f_x(u,v)x+f_y(u,v)y+f_{xy}(u,v)xy.
\label{eq:reflection-decomposition}
\end{equation}
This decomposition is unique, as shown by the normal-form computation below.
Hence $\mathcal R_0$ is a free $\mathcal R$-module of rank four, with
basis $\{1,x,y,xy\}$.
Since $u$ and $v$ are invariant under $s_x$ and $s_y$, 
the reflections act as $\mathcal R$-module homomorphisms on $\mathcal R^4$.
With respect to the basis $\{1,x,y,xy\}$, their actions are given by
\begin{equation}
s_x\mapsto
\begin{pmatrix}
1&u&0&0\\
0&1&0&0\\
0&0&1&u\\
0&0&0&1
\end{pmatrix},
\qquad
s_y\mapsto
\begin{pmatrix}
1&0&v&0\\
0&1&0&v\\
0&0&1&0\\
0&0&0&1
\end{pmatrix}.
\label{eq:reflection-action-matrices}
\end{equation}
The translations $t_x$ and $t_y$, which act by multiplication by $x$ and $y$, 
respectively, are also $\mathcal R$-module homomorphisms on $\mathcal R^4$.
In the same basis, they are represented by
\begin{equation}
t_x\mapsto
\begin{pmatrix}
0&1&0&0\\
1&u&0&0\\
0&0&0&1\\
0&0&1&u
\end{pmatrix},
\qquad
t_y\mapsto
\begin{pmatrix}
0&0&1&0\\
0&0&0&1\\
1&0&v&0\\
0&1&0&v
\end{pmatrix}.
\label{eq:translation-action-matrices}
\end{equation}
Therefore the group action extends to a matrix map whose values are $\mathcal R$-module homomorphisms:
\begin{equation}
\Lambda:\F[G]\to \operatorname{End}_{\mathcal R}(\mathcal R^4).
\end{equation}
The chosen isomorphism $\mathcal R_0\cong\mathcal R^4$ sends the actions of $s_x,s_y,t_x,t_y$ to the $4\times4$ matrices above.
Thus each group-algebra element can be represented as a $4\times4$ matrix under $\Lambda$ by multiplication and addition of $s_x,s_y,t_x,t_y$.
The previous two-entry data $(f,h)$ are then represented, under the same isomorphism, by the $4\times8$ block map
\begin{equation}
\bigl(\Lambda(f)\mid \Lambda(h)\bigr):
\mathcal R^8\longrightarrow \mathcal R^4.
\label{eq:lambda-check-map}
\end{equation}
For the two pairs above, the lifted block maps of $\Gamma_X$ and $\Gamma_Z$ are
\begin{widetext}
\begin{equation}
\begin{aligned}
\Lambda_X&=
\left(
\begin{array}{cccc|cccc}
	uv&u^2v&u&u^2+v^2&u&u^2&0&v\\
	0&uv&v^2&u&0&u&v&0\\
	u(v^2+1)&u^2v^2+u^2+v^2&uv&u^2v&uv&u^2v+v&u&u^2\\
	v^2&u(v^2+1)&0&uv&v&uv&0&u
\end{array}
\right),\\
\Lambda_Z&=
\left(
\begin{array}{cccc|cccc}
	u&u^2&0&v&uv&u^2v&u&u^2+v^2\\
	0&u&v&0&0&uv&v^2&u\\
	uv&u^2v+v&u&u^2&u(v^2+1)&u^2v^2+u^2+v^2&uv&u^2v\\
	v&uv&0&u&v^2&u(v^2+1)&0&uv
\end{array}
\right),
\end{aligned}
\label{eq:lambda-mz-example}
\end{equation}
\end{widetext}
respectively.
This example illustrates the module-lifting procedure used below.

\subsection{Module lifting}
\label{subsec:module-lifting}

We now generalize the algebraic module-lifting step illustrated above.
The goal is to find an invariant polynomial subring $\mathcal R$, express $\mathcal R_0$ as a finite free $\mathcal R$-module, and represent all generators of $G$ by $\mathcal R$-module homomorphisms of $\mathcal R^n$.
Here $n$ is the rank of $\mathcal R_0$ as a free $\mathcal R$-module.
After such a basis is fixed, each group-algebra element acts on $\mathcal R^n$ through an $\mathcal R$-module homomorphism:
\begin{equation}
	\Lambda:\F[G]\to \operatorname{End}_{\mathcal R}(\mathcal R^n).
	\label{eq:lambda-action}
\end{equation}

\begin{lemma}
	\label{thm:invariants-subring}
	The invariant set defined as
	$
		\mathcal R_0^P=\{p\in \mathcal R_0\mid g(p)=p,\ \forall g\in P\}
	$
	is a subring of $\mathcal R_0$.
\end{lemma}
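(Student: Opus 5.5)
The plan is to check directly that each $p_\rho\in P$ acts on $\mathcal R_0$ as a ring endomorphism, and then to apply the subring criterion to the common fixed set. Specifically, we must show that $\mathcal R_0^P$ contains $0$ and $1$ and is closed under addition, negation, and multiplication.

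\textbf{Step 1: each point-group element is a ring automorphism.} Start from Eq.~(\ref{eq:point-group-action-r0}). The map $p_\rho$ is $\F$-linear by definition, since it is defined by linear extension on monomials. On monomials,
\begin{equation}
p_\rho(x^z x^{z'})=p_\rho(x^{z+z'})=x^{\rho(z+z')}=x^{\rho z}x^{\rho z'}=p_\rho(x^z)\,p_\rho(x^{z'}),
\end{equation}
which uses only that $\rho$ is a linear (integral) map on $\mathbb Z^D$. Bilinear extension over the finite sums $q=\sum_z c_z x^z$ then gives $p_\rho(q_1q_2)=p_\rho(q_1)p_\rho(q_2)$. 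In addition, $p_\rho(1)=x^{\rho 0}=1$. Because $\rho$ has finite order, $p_\rho$ is invertible with inverse $p_{\rho^{-1}}$, so it is an automorphism, although only the homomorphism property is actually needed.

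\textbf{Step 2: subring axioms.} Let $p_1,p_2\in\mathcal R_0^P$ and $g\in P$. By linearity, $g(p_1+p_2)=g(p_1)+g(p_2)=p_1+p_2$. Over $\F$ we have $-p=p$, so closure under negation is automatic. Multiplicativity from Step 1 gives $g(p_1p_2)=g(p_1)g(p_2)=p_1p_2$. Finally, $g(0)=0$ and $g(1)=1$. Hence $\mathcal R_0^P$ is closed under the ring operations, contains $0$ and $1$, and is therefore a subring; it is also commutative, since it sits inside $\mathcal R_0$.

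\textbf{Main obstacle.} The only nontrivial point is the multiplicativity in Step 1. It holds because point-group elements act through linear maps $\rho$ on exponents, with no translation part. A translation $t_b$ with $b\ne0$ would not be multiplicative, as $x^{b}x^{z+z'}\ne x^{z+b}x^{z'+b}$. This is exactly why the invariant ring is taken with respect to $P$ rather than the full group $G$.
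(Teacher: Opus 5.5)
Your proof is correct and follows the same route as the paper: both arguments rest on the fact that a point-group element acts on exponents through the linear map $\rho$. Hence $g(p_1p_2)=g(p_1)g(p_2)$ and $g(p_1+p_2)=g(p_1)+g(p_2)$, which gives closure of the fixed set. You are also slightly more thorough than the paper in explicitly checking that the fixed set contains $0$ and $1$ and is closed under negation, which the paper leaves implicit.
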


\begin{proof}
	For any $p_1,p_2\in\mathcal R_0^P$, we write them as
	$p_1=\sum_m c_m x^m$ and $
	p_2=\sum_n d_n x^n
	$, where $c_m,d_n\in \F$.
	Acting any element $g\in P$ on $p_1 p_2$ leads to
		\begin{align}
		g(p_1p_2)=\sum_{m,n}c_m d_n x^{\rho_g(m+n)}
		=g(p_1)g(p_2)=p_1 p_2,
	\end{align}
   where we have used the fact that $g(x^m)=x^{\rho_g m}$ 
   (see Eq.~(\ref{eq:point-group-action-r0})). In addition, we have
   \begin{equation}
   	g(p_1+p_2)=g(p_1)+g(p_2)=p_1+p_2.
   \end{equation}
   They tell us that $p_1 p_2 \in \mathcal R_0^P$ and $p_1+p_2 \in \mathcal R_0^P$, 
   indicating that $\mathcal R_0^P$ is a subring of $\mathcal R_0$.   
\end{proof}

The invariant ring $\mathcal R_0^P$ needs not itself be a polynomial ring~\cite{derksen2015computational}.
We therefore construct a polynomial subring
\begin{equation}
	\mathcal R\subseteq \mathcal R_0^P\subseteq \mathcal R_0
\end{equation}
such that $\mathcal R_0$ is a finite free $\mathcal R$-module.
This formalism can simplify problems.

\begin{lemma}
	\label{lem:finite-module-integral}
	Let $B=A[b_1,\ldots,b_N]$ be a finitely generated $A$-algebra.
	If each $b_i$ is integral over $A$, then $B$ is a finite $A$-module.
\end{lemma}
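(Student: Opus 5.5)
The plan is to prove the standard commutative-algebra fact by induction on the number $N$ of adjoined generators, using the transitivity of finiteness in towers of rings. Throughout, $A\subseteq B$ are commutative rings, as in the intended application $\mathcal R\subseteq\mathcal R_0$.

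First, I would handle a single integral element. Suppose $b\in B$ satisfies a monic relation
\begin{equation}
b^m+a_{m-1}b^{m-1}+\cdots+a_0=0,\qquad a_i\in A .
\end{equation}
I claim $A[b]$ is generated as an $A$-module by $1,b,\ldots,b^{m-1}$. To see this, I show by induction on $k\ge m$ that $b^k$ lies in the $A$-span of these powers. The relation gives $b^k=-\sum_{i<m}a_i b^{k-m+i}$, and every exponent on the right is smaller than $k$, so the induction hypothesis applies. Since every element of $A[b]$ is an $A$-linear combination of powers of $b$, this proves the claim.

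Second, I would prove transitivity: if $B$ is a finite $C$-module generated by $\beta_1,\ldots,\beta_r$, and $C$ is a finite $A$-module generated by $\gamma_1,\ldots,\gamma_s$, then $B$ is a finite $A$-module generated by the products $\gamma_j\beta_k$. The argument is direct. Write $x=\sum_k c_k\beta_k$ with $c_k\in C$, then expand each $c_k=\sum_j a_{jk}\gamma_j$ with $a_{jk}\in A$. Substituting gives $x=\sum_{j,k}a_{jk}\gamma_j\beta_k$.

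Third, I would assemble the pieces. Set $A_0=A$ and $A_i=A_{i-1}[b_i]$, so that $A_N=B$. Each $b_i$ is integral over $A$, and hence over $A_{i-1}\supseteq A$, because the same monic polynomial still works with its coefficients viewed in $A_{i-1}$. By the first step, $A_i$ is a finite $A_{i-1}$-module. Applying the second step inductively along the tower then shows that $B$ is a finite $A$-module, generated by the monomials $b_1^{e_1}\cdots b_N^{e_N}$ with $0\le e_i<m_i$, where $m_i$ is the degree of the monic relation for $b_i$.

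No step here is a real obstacle. The only point needing care is the observation in the third step that integrality over $A$ passes to the larger ring $A_{i-1}$. This is what lets the one-variable step be applied at each stage of the tower rather than only over $A$ itself.
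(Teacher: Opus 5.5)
Your proof is correct and takes the same route as the paper: the monic relation reduces every power of $b_i$ to the $A$-span of $1,b_i,\ldots,b_i^{m-1}$, so $A[b_i]$ is a finite $A$-module. The paper only says that the case of several generators ``follows straightforwardly''; your tower $A_i=A_{i-1}[b_i]$, with transitivity of finiteness and the remark that integrality over $A$ persists over $A_{i-1}$, is the standard argument that fills in that step.
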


\begin{proof}
	Since $b_i$ is integral over $A$, there exists a positive integer 
	$m$ and $a_j\in A$ such that
	\begin{equation}
		b_i^m+a_{m-1}b_i^{m-1}+\cdots+a_0=0,\qquad a_j\in A.
	\end{equation}
	Thus, $b_i^m$ is an $A$-linear combination of lower powers, 
	and all higher powers of $b_i$ reduce to the finite set $\{1,b_i,\ldots,b_i^{m-1}\}$.
	Hence $A[b_i]$ is a finite $A$-module.
	The case of several generators follows straightforwardly.
\end{proof}

\begin{theorem}
	\label{thm:r0-finite-over-invariants}
	$\mathcal R_0$ is a finite $\mathcal R_0^P$-module.
\end{theorem}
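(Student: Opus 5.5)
We will apply Lemma~\ref{lem:finite-module-integral} with $A=\mathcal R_0^P$ and $B=\mathcal R_0$. The idea is the standard one from invariant theory for finite groups: every element of $\mathcal R_0$ is a root of a monic polynomial whose coefficients are symmetric functions of its $P$-orbit, and these coefficients are $P$-invariant.

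First, note that $\mathcal R_0$ is finitely generated as an algebra over $\mathcal R_0^P$. It is already finitely generated over $\F\subseteq\mathcal R_0^P$, with generators
\begin{equation}
	x_1,\ldots,x_D,\;x_1^{-1},\ldots,x_D^{-1}.
\end{equation}
Hence $\mathcal R_0=\mathcal R_0^P[x_1^{\pm1},\ldots,x_D^{\pm1}]$, so it suffices to show that each generator $b$ is integral over $\mathcal R_0^P$.

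For integrality, fix $b\in\mathcal R_0$ and form
\begin{equation}
	\chi_b(T)=\prod_{g\in P}\bigl(T-g(b)\bigr)\in\mathcal R_0[T].
\end{equation}
This polynomial is monic of degree $|P|$, which is finite since $P$ is a finite point group. It vanishes at $T=b$ because the factor with $g=e$ is zero.

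The step requiring care is showing that the coefficients of $\chi_b$ lie in $\mathcal R_0^P$. Each $k\in P$ acts on $\mathcal R_0$ as a ring automorphism: it is multiplicative by the computation in the proof of Lemma~\ref{thm:invariants-subring}, and it is invertible since $k^{-1}$ is also in $P$. Extending $k$ to $\mathcal R_0[T]$ coefficientwise, with $T$ fixed, gives
\begin{equation}
	k(\chi_b)=\prod_{g\in P}\bigl(T-kg(b)\bigr).
\end{equation}
We also need that $k(g(b))=(kg)(b)$. This holds because the monomial action is a genuine group action: $x^z\mapsto x^{\rho_g z}$ and $\rho_{kg}=\rho_k\rho_g$. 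Since $g\mapsto kg$ permutes $P$, the product is unchanged, so $k(\chi_b)=\chi_b$ and every coefficient is invariant.

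Therefore each $x_i^{\pm1}$ satisfies a monic polynomial with coefficients in $\mathcal R_0^P$, i.e., it is integral. Lemma~\ref{lem:finite-module-integral} then gives that $\mathcal R_0$ is a finite $\mathcal R_0^P$-module.

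As a consistency check, for a single reflection acting on $x$ this construction gives
\begin{equation}
	\chi_x(T)=(T-x)(T-x^{-1})=T^2+uT+1,
\end{equation}
with $u=x+x^{-1}$ (signs are immaterial over $\F$). This is exactly the reduction of powers used in Eq.~(\ref{eq:reflection-decomposition}).
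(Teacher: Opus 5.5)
Your proof is correct and follows the same route as the paper. The orbit polynomial $\prod_{g\in P}(T-g(b))$ is monic, vanishes at $b$, and has $P$-invariant coefficients, so $\mathcal R_0$ is integral over $\mathcal R_0^P$ and Lemma~\ref{lem:finite-module-integral} applies. Your version is tidier in two respects: you state explicitly that $\mathcal R_0=\mathcal R_0^P[x_1^{\pm1},\ldots,x_D^{\pm1}]$ is a finitely generated algebra, which the lemma requires and the paper leaves implicit, and you get invariance by letting $k$ act on $\mathcal R_0[T]$ and permute the factors, instead of the paper's subset-by-subset argument.
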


\begin{proof}
	For any $p\in\mathcal R_0$, let $P=\{g_1,g_2,\dots,g_n\}$, where $n=|P|$,
	and set $a_i=g_i(p)$.
	Consider the orbit polynomial
	\begin{equation}
		F_p(T)=\prod_{i=1}^n (T+a_i)=T^n + u_1 T^{n-1}+\dots+u_n,
	\end{equation}
    where $T\in R_0$ and $
    u_j = \sum_{1\le i_1<\cdots<i_j\le n} a_{i_1}\cdots a_{i_j}
    $. In particular, $u_1=\sum_{g\in P}g(p)$. 
    
    We claim that each coefficient $u_j$ is invariant under the action of $P$. For $u_1$,
    this is immediate from the fact that left multiplication by any $g'\in P$ permutes 
    the set $P$, yielding $g'(u_1 )=u_1$. 
    For general $j$, the coefficient $u_j$ is the sum of all products of the form
    $
    g_{k_1}(p)g_{k_2}(p)\cdots g_{k_j}(p),
    $
    taken over all $j$-element subsets $\{g_{k_1},\dots,g_{k_j}\}\subset P$ with distinct indices.
    Applying $g'$ to such a term leads to $g'(g_{k_1}(p) g_{k_2}(p)\cdots g_{k_j}(p))
    =(g' g_{k_1})(p) (g' g_{k_2})(p)\dots (g' g_{k_j})(p)=g_{k_1'}(p) g_{k_2'}(p)…g_{k_j'}(p)$ 
    where $g' g_{k_i}=g_{k_{i'}}$. 
    For two disjoint $j$-element subsets $S_1=\{g_{k_1},\dots,g_{k_j}\}$ and $S_2=\{g_{l_1},\dots,g_{l_j}\}$ 
    of $P$, we claim that their images $g'S_1$ and $g'S_2$ remain disjoint. Indeed, if 
    $g'S_1\cap g' S_2\neq\varnothing$, then $g'g_{k_a}=g'g_{l_b}$ for some indices $a,b$, 
    and left cancellation gives $g_{k_a}=g_{l_b}$, contradicting the 
    assumption that $S_1\cap S_2=\varnothing$. Therefore $g'(u_j)=u_j$, so $u_j\in\mathcal R_0^P$.
    
	The polynomial is monic and has $p$ as a root, that is, $F_p(p)=0$.
	Thus every $p\in\mathcal R_0$ is integral over $\mathcal R_0^P$.
	Since $P$ is finite, Lemma~\ref{lem:finite-module-integral} implies that $\mathcal R_0$ is a finite $\mathcal R_0^P$-module.
\end{proof}

\begin{theorem}
	\label{thm:polynomial-subring-free}
	There exists a polynomial subring
	\begin{equation}
		\mathcal R=\F[\phi_1,\ldots,\phi_D]\subseteq\mathcal R_0^P
	\end{equation}
	such that $\mathcal R_0$ is a finite free $\mathcal R$-module.
\end{theorem}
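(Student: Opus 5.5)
The plan is to combine Noether normalization with the fact that a regular (hence Cohen--Macaulay) ring that is finite over a polynomial ring is locally free over it, and then to upgrade "locally free" to "free" with Quillen--Suslin.

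\emph{Step 1: finite generation of the invariants.} Write $\mathcal R_0=\F[x_1^{\pm1},\ldots,x_D^{\pm1}]$. This is a finitely generated $\F$-algebra, a domain, and of Krull dimension $D$. By Theorem~\ref{thm:r0-finite-over-invariants}, $\mathcal R_0$ is a finite $\mathcal R_0^P$-module. The Artin--Tate lemma, applied to $\F\subseteq\mathcal R_0^P\subseteq\mathcal R_0$, then shows that $\mathcal R_0^P$ is a finitely generated $\F$-algebra. Alternatively, one can invoke Noether's theorem on invariants of finite groups directly.

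\emph{Step 2: choosing the polynomial subring.} Since $\mathcal R_0$ is integral over $\mathcal R_0^P$, the two rings have the same Krull dimension $D$. Noether normalization applied to $\mathcal R_0^P$ gives algebraically independent elements $\phi_1,\ldots,\phi_D\in\mathcal R_0^P$ such that $\mathcal R_0^P$ is finite over $\mathcal R=\F[\phi_1,\ldots,\phi_D]$. Over the finite field $\F$ one should use the version with monomial substitutions $y_i\mapsto y_i+y_1^{N^i}$ rather than linear changes of variables, but this version works over any field. By transitivity of finiteness, $\mathcal R_0$ is a finite $\mathcal R$-module. In the guiding example this produces exactly $u=x+x^{-1}$ and $v=y+y^{-1}$.

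\emph{Step 3: local freeness.} Let $\mathfrak m\subset\mathcal R$ be a maximal ideal. I would show that $M_{\mathfrak m}:=(\mathcal R_0)_{\mathfrak m}$ is a free $\mathcal R_{\mathfrak m}$-module, using the Auslander--Buchsbaum formula
\[
\operatorname{pd}M_{\mathfrak m}+\operatorname{depth}M_{\mathfrak m}=\dim\mathcal R_{\mathfrak m}=D .
\]
The base $\mathcal R_{\mathfrak m}$ is regular local, so $\operatorname{pd}M_{\mathfrak m}$ is finite and the formula applies. It therefore suffices to show $\operatorname{depth}_{\mathcal R_{\mathfrak m}}M_{\mathfrak m}=D$.

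Depth over $\mathcal R_{\mathfrak m}$ can be computed in the ring $(\mathcal R_0)_{\mathfrak m}$ via its maximal ideals. These are the maximal ideals $\mathfrak n$ of $\mathcal R_0$ lying over $\mathfrak m$, and the relevant depth is $\min_{\mathfrak n}\operatorname{depth}(\mathcal R_0)_{\mathfrak n}$. The ring $\mathcal R_0$ is a localization of a polynomial ring, so it is regular and hence Cohen--Macaulay. This gives $\operatorname{depth}(\mathcal R_0)_{\mathfrak n}=\operatorname{ht}\mathfrak n$.

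It remains to check that every such $\mathfrak n$ has height $D$. The extension $\mathcal R\subseteq\mathcal R_0$ is an integral extension of domains with $\mathcal R$ normal, so going-down holds. Together with incomparability and the fact that $\mathfrak n\cap\mathcal R=\mathfrak m$, this gives $\operatorname{ht}\mathfrak n=\operatorname{ht}\mathfrak m=D$. Hence $\operatorname{pd}M_{\mathfrak m}=0$, and $M_{\mathfrak m}$ is free.

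\emph{Step 4: global freeness.} By Step 3, $\mathcal R_0$ is a finitely generated projective $\mathcal R$-module. By the Quillen--Suslin theorem, every such module over a polynomial ring over a field is free, of rank $n=[\operatorname{Frac}\mathcal R_0:\operatorname{Frac}\mathcal R]$. This rank equals $4$ in the reflection example.

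\emph{Main obstacle.} I expect Step 3 to be the hardest part, because the setting is not graded. The standard graded argument ("Cohen--Macaulay finite over a polynomial ring implies free") is unavailable: Laurent monomials have no positive grading compatible with $P$. The comparison of depth across the semilocal ring $(\mathcal R_0)_{\mathfrak m}$ and the height computation via going-down must therefore be carried out carefully at every maximal ideal. Quillen--Suslin is what removes the remaining gap between locally free and free.
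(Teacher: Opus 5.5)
Your proposal is correct and follows the same route as the paper: Artin--Tate for finite generation of $\mathcal R_0^P$, Noether normalization to obtain $\mathcal R$, flatness of the finite Cohen--Macaulay extension $\mathcal R\subseteq\mathcal R_0$ over the regular ring $\mathcal R$, and Quillen--Suslin to pass from projective to free. The one difference is that the paper cites the miracle-flatness criterion, whereas you prove the local freeness directly via Auslander--Buchsbaum with an explicit height computation; this also makes explicit the dimension-matching hypothesis the paper leaves implicit.
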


\begin{proof}
	The argument uses standard finite-group invariant theory and commutative algebra~\cite{derksen2015computational,eisenbud1995commutative}.
	The ring $\mathcal R_0$ is a finite $\mathcal R_0^P$-module.
	The Artin--Tate lemma then implies that $\mathcal R_0^P$ is a finitely generated $\F$-algebra~\cite{eisenbud1995commutative}.
	Since $\mathcal R_0$ has Krull dimension $D$ and integral extensions preserve Krull dimension, $\mathcal R_0^P$ also has Krull dimension $D$~\cite{eisenbud1995commutative}.
	Noether normalization applied to $\mathcal R_0^P$ therefore gives $D$ algebraically independent invariants $\phi_1,\ldots,\phi_D$ such that $\mathcal R_0^P$ is finite over $\mathcal R=\F[\phi_1,\ldots,\phi_D]$~\cite{eisenbud1995commutative}.
	Since $\mathcal R_0$ is finite over $\mathcal R_0^P$, it is finite over $\mathcal R$.
	The remaining step is a standard commutative-algebra freeness statement.
	The Laurent polynomial ring $\mathcal R_0$ is regular and hence Cohen--Macaulay, while $\mathcal R$ is a polynomial ring, which is regular~\cite{eisenbud1995commutative}.
	By the local miracle-flatness criterion, equivalently the finite Cohen--Macaulay-over-regular case, $\mathcal R_0$ is finite flat over $\mathcal R$~\cite{eisenbud1995commutative}.
	Since $\mathcal R$ is Noetherian and $\mathcal R_0$ is finite over $\mathcal R$, the module $\mathcal R_0$ is finitely presented; hence this finite flat module is finite projective~\cite{eisenbud1995commutative}.
	By the Quillen--Suslin theorem, every finite projective module over a polynomial ring is free~\cite{lam2006serre}.
	Therefore there is a finite integer $n$ and an $\mathcal R$-basis
	\begin{equation}
		E=\{e_1,\ldots,e_n\}
	\end{equation}
	such that
	\begin{equation}
		\mathcal R_0\cong \mathcal R^n.
	\end{equation}
\end{proof}

Theorem~\ref{thm:polynomial-subring-free} is an existence statement.
For the examples used in this paper, we choose $\mathcal R$ from explicit orbit sums and verify the resulting finite-free presentation by a Gr\"{o}bner normal-form calculation.
For each orbit
\begin{equation}
	O=\{g(p)\mid g\in P\},
\end{equation}
define
\begin{equation}
	s_O=\sum_{\chi\in O}\chi.
\end{equation}
Although there are infinitely many such orbit sums, the concrete two-dimensional point groups in this work use the verified choices listed in Appendix~\ref{app:basis-table}.
For a new point group, the same verification consists of choosing candidate invariants, computing the normal-form basis below, and checking that the listed monomials span uniquely over the chosen polynomial subring.

The basis of $\mathcal R_0$ as a module over $\mathcal R$ is computed by a Gr\"{o}bner normal form~\cite{cox1997ideals,greuel2008singular}.
Rewrite the Laurent polynomial ring as a quotient of an ordinary polynomial ring by introducing inverse variables $X_i,Y_i$ with $Y_i=X_i^{-1}$.
Introduce base variables $U_1,\ldots,U_D$ for the invariants $\phi_1,\ldots,\phi_D$ and define
\begin{equation}
	B=\F[U_1,\ldots,U_D,X_1,\ldots,X_D,Y_1,\ldots,Y_D].
\end{equation}
Let
\begin{equation}
	I=\langle X_iY_i+1,\ U_j+\Phi_j(X,Y)\rangle,
	\label{eq:normal-form-ideal}
\end{equation}
where $\Phi_j(X,Y)$ is the expression of $\phi_j$ in the variables $X_i,Y_i$.
A Gr\"{o}bner basis of $I$ gives normal-form monomials~\cite{cox1997ideals,greuel2008singular}.
Those not divisible by leading terms form an $\mathcal R$-basis.
For the reflection example, where $Y_1=X_1^{-1}$ and $Y_2=X_2^{-1}$,
\begin{equation}
	X_1^2+U X_1+1=0,\qquad X_2^2+V X_2+1=0.
\end{equation}
The leading terms are $X_1^2,X_2^2,Y_1,Y_2$, so the standard monomials are
\begin{equation}
	\{1,X_1,X_2,X_1X_2\},
\end{equation}
which correspond to $\{1,x,y,xy\}$ in the original Laurent ring.
This confirms the uniqueness of Eq.~\eqref{eq:reflection-decomposition}.

Let
\begin{equation}
	\theta:\mathcal R^n\xrightarrow{\cong}\mathcal R_0
\end{equation}
be the chosen free-module identification.
The lifted matrix map $\Lambda$ is defined by transporting the original group-algebra action $\Gamma$ through this basis:
\begin{equation}
	\Lambda(q)=\theta^{-1}\Gamma(q)\theta,\qquad q\in\F[G].
	\label{eq:lambda-gamma-conjugacy}
\end{equation}
Equivalently, $\theta\Lambda(q)=\Gamma(q)\theta$.
For the chosen pairs $(f_X,h_X)$ and $(f_Z,h_Z)$, define
\begin{equation} \label{eq:lambda-syndrome-maps}
	\begin{split}
		\Lambda_X &= \bigl(\Lambda(f_X)\mid \Lambda(h_X)\bigr),\\
		\Lambda_Z &= \bigl(\Lambda(f_Z)\mid \Lambda(h_Z)\bigr).
	\end{split}
\end{equation}
$\Lambda_X$ and $\Lambda_Z$ are exactly the $\mathcal R$-basis 
expressions of the original check maps $H_X$ and $H_Z$ in Eqs.~\eqref{eq:gamma-block-maps}.

The transpose maps are defined by applying the same group-algebra transpose $q\mapsto q^T$ before forming the column block:
\begin{equation} \label{eq:lambda-assembled-check}
	\begin{split}
	\Lambda_X^T &=\bigl(\Lambda(f_X^T)\mid \Lambda(h_X^T)\bigr)^T,\\
	\Lambda_Z^T &=\bigl(\Lambda(f_Z^T)\mid \Lambda(h_Z^T)\bigr)^T.
	\end{split}
\end{equation}
Here $\Lambda_X^T,\Lambda_Z^T:\mathcal R^n\to\mathcal R^{2n}$.
The compatibility with the ordinary transpose of the original permutation check matrices is obtained by applying Eq.~\eqref{eq:lambda-gamma-conjugacy} to $f_X^T,h_X^T$ and $f_Z^T,h_Z^T$.
Thus $\Lambda_X^T$ and $\Lambda_Z^T$ are the $\mathcal R$-basis expressions of $H_X^T$ and $H_Z^T$, respectively.
They should not be confused with the ordinary transposes of the $\mathcal R$-matrices $\Lambda_X$ and $\Lambda_Z$; after changing from the Laurent-monomial basis to a finite $\mathcal R$-basis, formal group-algebra transpose and ordinary matrix transpose need not coincide.

\subsection{Topological-order condition}
\label{subsec:topological-order-condition}

With the block-map notation fixed in Eqs.~\eqref{eq:gamma-block-maps},
~\eqref{eq:lambda-syndrome-maps}, and~\eqref{eq:lambda-assembled-check}, the lifted CSS complex is
\begin{equation}
	\begin{tikzcd}[column sep=large,row sep=large]
		\mathcal R_0
		\arrow[r,"\Gamma_X^T"]
		&
		\mathcal R_0^2
		\arrow[r,"\Gamma_Z"]
		&
		\mathcal R_0
		\\
		\mathcal R^n
		\arrow[u]
		\arrow[r,"\Lambda_X^T"']
		&
		\mathcal R^{2n}
		\arrow[u]
		\arrow[r,"\Lambda_Z"']
		&
		\mathcal R^n
		\arrow[u] .
	\end{tikzcd}
	\label{eq:lifted-css-complex}
\end{equation}
The dual lifted complex is obtained by exchanging $X$ and $Z$ and applying the same formal transpose.

We use the Buchsbaum--Eisenbud criterion to test exactness 
at the middle Pauli-support module~\cite{buchsbaum1975generic}.
We state this criterion as a lemma below and refer the interested 
readers to Ref.~\cite{buchsbaum1975generic} for further details.
\begin{lemma}[Buchsbaum--Eisenbud criterion]
	\label{lem:buchsbaum-eisenbud}
	Let $\mathcal R$ be a Noetherian ring and let
	\begin{equation}
		C_\bullet:\quad
		0\to C_m\xrightarrow{\partial_m}C_{m-1}
		\xrightarrow{\partial_{m-1}}\cdots
		\xrightarrow{\partial_2}C_1\xrightarrow{\partial_1}C_0
	\end{equation}
	be a finite free complex.
	Let $r_i=\rank \partial_i$ and let $I_{r_i}(\partial_i)$ be the ideal generated 
	by all $r_i\times r_i$ minors of $\partial_i$.
	Then $C_\bullet$ is exact if and only if, for every $i$,
	\begin{equation}
		\rank C_i=r_i+r_{i+1},\qquad
		\grade I_{r_i}(\partial_i)\ge i.
		\label{eq:be-condition}
	\end{equation}
\end{lemma}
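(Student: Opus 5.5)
The plan is to follow the classical proof of Buchsbaum and Eisenbud. It rests on three tools: localization at primes, the fact that a finite free complex over a local ring splits wherever its determinantal ideals are the unit ideal, and the acyclicity lemma of Peskine--Szpiro. We use the conventions $r_{m+1}=0$, $I_0=\mathcal R$, and $\grade \mathcal R=\infty$. As the criterion is applied here, exactness is meant at $C_m,\ldots,C_1$, i.e., $H_i(C_\bullet)=0$ for all $i\ge1$, with no condition at $C_0$. The first step is a splitting lemma. Let $\mathfrak p$ be a prime with $I_{r_i}(\partial_i)\not\subseteq\mathfrak p$ for all $i\ge1$. Then over the local ring $\mathcal R_{\mathfrak p}$ some $r_i\times r_i$ minor of each $\partial_i$ is a unit. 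Given the rank equalities $\rank C_i=r_i+r_{i+1}$, we can then change bases, starting from $\partial_m$ (which is injective with a unit maximal minor), so that $(C_\bullet)_{\mathfrak p}$ becomes a direct sum of trivial complexes $0\to\mathcal R_{\mathfrak p}^{r}\xrightarrow{\ \mathrm{id}\ }\mathcal R_{\mathfrak p}^{r}\to0$, plus a free module at $C_0$. Hence $(C_\bullet)_{\mathfrak p}$ is split exact in degrees $\ge1$.

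For sufficiency, assume both conditions of Eq.~\eqref{eq:be-condition}, and induct on the length $m$, or equivalently localize at a prime $\mathfrak p$ minimal in the support of $\bigoplus_{i\ge1}H_i(C_\bullet)$. Replacing $\mathcal R$ by $\mathcal R_{\mathfrak p}$, we may assume $\mathcal R$ is local and every $H_i$ has finite length. Two cases arise. If $\mathfrak p$ contains no $I_{r_i}(\partial_i)$, the splitting lemma gives exactness, a contradiction. Otherwise $\mathfrak p\supseteq I_{r_i}(\partial_i)$ for some $i$, so $\operatorname{depth}\mathcal R_{\mathfrak p}\ge\grade I_{r_i}(\partial_i)\ge i$. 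Because the grade conditions are nested, this gives $\operatorname{depth}\mathcal R_{\mathfrak p}\ge i$ for the relevant indices. The acyclicity lemma then applies: a finite free complex of length $\le$ depth whose homology in degrees $\ge1$ has finite length is acyclic. This contradicts the choice of $\mathfrak p$. The delicate point is arranging the depth inequality $\operatorname{depth}\mathcal R_{\mathfrak p}\ge i$ at exactly the indices where the acyclicity lemma needs it. We would handle this by truncating the complex at the largest $i$ with $I_{r_i}\subseteq\mathfrak p$; above that index the truncated part is already split exact at $\mathfrak p$.

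For necessity, assume $C_\bullet$ is exact. The rank equalities follow by tensoring with the total ring of fractions, or with $\mathcal R_{\mathfrak p}$ for associated primes $\mathfrak p$. We would use McCoy's theorem to identify $r_i$ with the largest size of a minor that is not annihilated. Over such zero-dimensional localizations, exactness splits, and ranks are additive along the complex. For the grade bound, suppose $\grade I_{r_i}(\partial_i)<i$. Then there is a prime $\mathfrak p\supseteq I_{r_i}(\partial_i)$ with $\operatorname{depth}\mathcal R_{\mathfrak p}<i$. Localizing, the image of $\partial_i$ is a module of finite projective dimension that is not free. We compare this with Auslander--Buchsbaum: the truncated exact complex is a free resolution of $\operatorname{coker}\partial_{i+1}$, and that cokernel embeds in $C_{i-1}$. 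The depth lemma then forces $\operatorname{coker}\partial_{i+1}$ to be free, so $I_{r_i}(\partial_i)_{\mathfrak p}=\mathcal R_{\mathfrak p}$, a contradiction.

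The step expected to be the main obstacle is the sufficiency direction. It hinges on proving, or invoking, the acyclicity lemma itself, whose proof is an induction on depth using long exact homology sequences and regular elements. It also requires ensuring that nonzerodivisors exist in each $I_{r_i}$ even when $\mathcal R$ is not a domain. For the polynomial rings $\mathcal R=\F[\phi_1,\ldots,\phi_D]$ used here, which are domains, the rank bookkeeping simplifies considerably.
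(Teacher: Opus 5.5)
The paper does not prove this lemma; it quotes the Buchsbaum--Eisenbud criterion and cites the literature. The relevant benchmark is therefore the classical argument, and that is the route you take.

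Your sufficiency direction is sound, provided you drop the remark that the grade conditions are ``nested''. No such monotonicity is among the hypotheses. Rely on your truncation instead. Let $k$ be the largest index with $I_{r_k}(\partial_k)\subseteq\mathfrak p$. Then the maps $\partial_j$ with $j>k$ split over $\mathcal R_{\mathfrak p}$, and $\operatorname{im}\partial_{k+1}$ is a free summand of $(C_k)_{\mathfrak p}$. What remains is a free complex of length $k\le\grade I_{r_k}(\partial_k)\le\operatorname{depth}\mathcal R_{\mathfrak p}$ with finite-length homology, so the acyclicity lemma applies.

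In the rank part of the necessity direction, localizations at embedded associated primes (and the total ring of fractions) need not be zero-dimensional. What you actually use is that they have depth zero, so exact finite free complexes over them split by Auslander--Buchsbaum.

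The genuine gap is in the grade bound of the necessity direction, where the argument targets the wrong module. The containment $I_{r_i}(\partial_i)\subseteq\mathfrak p$ does not make $\operatorname{im}\partial_i$ non-free. Conversely, freeness of $\coker\partial_{i+1}\cong\operatorname{im}\partial_i$ does not give $I_{r_i}(\partial_i)_{\mathfrak p}=\mathcal R_{\mathfrak p}$. It only says that $\partial_{i+1}$ splits at $\mathfrak p$, which concerns $I_{r_{i+1}}(\partial_{i+1})$. For example, take the exact complex $0\to\F[x]\xrightarrow{\,x\,}\F[x]$ with $i=1$: $\coker\partial_2=C_1$ is free, yet $I_1(\partial_1)=(x)$. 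A related symptom: your argument needs only $\operatorname{depth}\mathcal R_{\mathfrak p}\le i$ to make $\operatorname{im}\partial_i$ free. If it were valid, it would prove the false bound $\grade I_{r_i}(\partial_i)\ge i+1$, which already fails for this example.

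The fix is to apply your tools to $\coker\partial_i=C_{i-1}/\operatorname{im}\partial_i$.
\begin{enumerate}
\item By exactness, $\coker\partial_i$ is isomorphic to $\operatorname{im}\partial_{i-1}$, or to $\coker\partial_1$ if $i=1$. This is an $(i-1)$-st syzygy of $\coker\partial_1$ and has a finite free resolution.
\item Start from $0\to\operatorname{im}\partial_1\to C_0\to\coker\partial_1\to0$ and iterate the depth lemma along $0\to\operatorname{im}\partial_{j+1}\to C_j\to\operatorname{im}\partial_j\to0$. This gives depth at least $\min(\operatorname{depth}\mathcal R_{\mathfrak p},\,i-1)$, which equals $\operatorname{depth}\mathcal R_{\mathfrak p}$. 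This is exactly where the hypothesis $\operatorname{depth}\mathcal R_{\mathfrak p}<i$, i.e.\ $\le i-1$, is used.
\item Auslander--Buchsbaum then makes $\coker\partial_i$ free at $\mathfrak p$, so $\partial_i$ splits there.
\item The split image has rank exactly $r_i$, because $I_{r_i}(\partial_i)$ contains a nonzerodivisor and $I_{r_i+1}(\partial_i)=0$ by the rank part.
\item Hence $I_{r_i}(\partial_i)_{\mathfrak p}=\mathcal R_{\mathfrak p}$, contradicting $I_{r_i}(\partial_i)\subseteq\mathfrak p$.
\end{enumerate}
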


The algebraic topological-order target used here is middle exactness of the lifted CSS complex and middle exactness of its dual:
\begin{equation}
	\imop(\Lambda_X^T)=\kerop(\Lambda_Z),
	\qquad
	\imop(\Lambda_Z^T)=\kerop(\Lambda_X).
\end{equation}
For the lifted CSS complex
\begin{equation}
	0\to \mathcal R^n\xrightarrow{\Lambda_X^T}
	\mathcal R^{2n}\xrightarrow{\Lambda_Z}\mathcal R^n,
\end{equation}
the Buchsbaum--Eisenbud condition for middle exactness gives both the middle rank condition and the codimension-two condition for the incoming map:
\begin{equation} 
	\begin{split}
	&\rank(\Lambda_X^T)+\rank(\Lambda_Z)=2n,\\
	&\grade I_{\rank(\Lambda_X^T)}(\Lambda_X^T)\ge2.
	\end{split}
\end{equation}
Since both maps have rank at most $n$, this is equivalent to $\rank(\Lambda_X^T)=\rank(\Lambda_Z)=n$ and $\grade I_n(\Lambda_X^T)\ge2$.
The Buchsbaum--Eisenbud grade condition for the outgoing map $\Lambda_Z$ has index one and is automatic once $\Lambda_Z$ has full rank over the polynomial domain.

Similarly, for the dual lifted complex
\begin{equation}
	0\to \mathcal R^n\xrightarrow{\Lambda_Z^T}
	\mathcal R^{2n}\xrightarrow{\Lambda_X}\mathcal R^n,
\end{equation}
the middle exactness gives
\begin{equation} 
	\begin{split}
	&\rank(\Lambda_Z^T)+\rank(\Lambda_X)=2n,\\
	&\grade I_{\rank(\Lambda_Z^T)}(\Lambda_Z^T)\ge2,
	\end{split}
\end{equation}
or equivalently $\rank(\Lambda_Z^T)=\rank(\Lambda_X)=n$ and $\grade I_n(\Lambda_Z^T)\ge2$.
Therefore, for the length-two lifted free complexes considered here, the following rank-and-grade conditions are necessary and sufficient for the middle exactness target above and its dual:
\begin{equation}
	\begin{cases}
		\rank(\Lambda_X^T)=\rank(\Lambda_Z)=n,\\
		\grade I_{n}(\Lambda_X^T)\ge 2,\\
		\rank(\Lambda_Z^T)=\rank(\Lambda_X)=n,\\
		\grade I_{n}(\Lambda_Z^T)\ge 2.
	\end{cases}
	\label{eq:topological-order-condition}
\end{equation}
The rank and grade tests in Eq.~\eqref{eq:topological-order-condition} are therefore performed on the formal transpose maps that actually enter the CSS complex.
They cannot in general be replaced by ordinary transpose invariance of the $\mathcal R$-matrices unless an additional unimodular pairing or an equivalent left-right equivalence has been supplied.
The syndrome quotient modules used below, such as $\coker(\Lambda_X)$ and $\coker(\Lambda_Z)$, are therefore compatible with this middle-exactness criterion; they record sector data associated with finite-support Pauli boundaries.

Each grade condition in the criterion can be converted into a gcd test.
\begin{theorem}
	\label{thm:gcd-grade-equivalence}
	Let $F$ be either $\Lambda_X^T$ or $\Lambda_Z^T$, and assume $\rank(F)=n$.
	Then
	\begin{equation}
		\grade I_n(F)\ge 2
	\end{equation}
	is equivalent to the condition that the $n\times n$ minors of $F$ have no nontrivial common factor:
	\begin{equation}
		\gcd I_n(F)=1.
	\end{equation}
\end{theorem}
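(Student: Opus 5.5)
We work over the polynomial ring $\mathcal R=\F[\phi_1,\ldots,\phi_D]$ from Theorem~\ref{thm:polynomial-subring-free}. It is a Noetherian UFD, and being Cohen--Macaulay, the grade of any proper ideal equals its height. If $I_n(F)=\mathcal R$, the grade is $\infty$ by convention and the gcd is $1$, so both conditions hold and there is nothing to prove. We therefore assume throughout that $I:=I_n(F)$ is a proper ideal. Since $\rank(F)=n$, some $n\times n$ minor is nonzero, so $I\neq 0$ and $\grade I\ge 1$. The claim thus reduces to: $\operatorname{ht} I\ge 2$ if and only if the generating minors have no common non-unit factor.

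\emph{Direction ($\Rightarrow$).} Suppose a non-unit $d$ divides every $n\times n$ minor. Pick an irreducible factor $\pi$ of $d$; it is prime because $\mathcal R$ is a UFD. Then $I\subseteq(\pi)$. By Krull's principal ideal theorem $(\pi)$ has height $1$, so $\operatorname{ht} I\le 1$ and hence $\grade I\le 1$. Equivalently, $\pi$ is a nonzerodivisor on $\mathcal R/(\pi')$ for no relevant choice: every element of $I$ is a zerodivisor on $\mathcal R/(\pi)$, so a maximal $\mathcal R$-regular sequence in $I$ cannot extend past length one. This contradicts $\grade I\ge 2$.

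\emph{Direction ($\Leftarrow$).} Suppose the minors have gcd $1$. Choose a nonzero minor $m_1\in I$ and factor it as $m_1=\prod\pi_k^{e_k}$. Since $\mathcal R$ is a domain, $m_1$ is regular on $\mathcal R$. The associated primes of $\mathcal R/(m_1)$ are exactly the principal primes $(\pi_k)$; this uses that $\mathcal R$ is a UFD, so principal ideals are unmixed of height one. Because the gcd of the minors is $1$, for each $k$ some minor is not divisible by $\pi_k$, so $I\not\subseteq(\pi_k)$ for every $k$. By prime avoidance, $I\not\subseteq\bigcup_k(\pi_k)$, so we can choose $m_2\in I$ avoiding all associated primes of $\mathcal R/(m_1)$. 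Then $m_2$ is regular on $\mathcal R/(m_1)$, and $(m_1,m_2)$ is a regular sequence in $I$. Hence $\grade I\ge 2$.

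The main obstacle is justifying the prime-avoidance step. One has to be explicit that the associated primes of $\mathcal R/(m_1)$ are the height-one primes $(\pi_k)$, and that there are finitely many of them. Over the field $\F$, prime avoidance still applies, since at most two of the ideals being avoided are allowed to be non-prime and here all of them are prime; no infinite-field hypothesis is needed. If the general statement is preferred, one can instead cite directly that in a Noetherian UFD an ideal has height $\ge 2$ if and only if it lies in no principal prime.
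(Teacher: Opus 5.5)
Your proof is correct, and it reaches the converse direction by a different route from the paper.

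\textbf{The forward direction.} Showing $\grade I\ge 2\Rightarrow\gcd=1$ is essentially the paper's argument. A common irreducible factor $\pi$ puts $I$ inside the height-one prime $(\pi)$, so $\grade I\le\operatorname{ht}I\le 1$. Note that this step only needs $\grade\le\operatorname{ht}$, which holds in any Noetherian ring.

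\textbf{The converse.} The paper uses Cohen--Macaulayness to replace grade by height. It then uses that every height-one prime of a UFD is principal, so $\operatorname{ht}I<2$ forces $I\subseteq(q)$ for an irreducible $q$ dividing all minors. You instead build an explicit regular sequence of length two. You take a nonzero minor $m_1$, whose quotient $\mathcal R/(m_1)$ has associated primes exactly the $(\pi_k)$ (from the primary decomposition $(m_1)=\bigcap_k(\pi_k^{e_k})$ in a UFD). You then choose $m_2\in I$ outside all of them by prime avoidance.

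\textbf{Comparison.} Your argument never actually uses the Cohen--Macaulay remark in your first paragraph. It therefore proves the equivalence in any Noetherian UFD and exhibits the regular sequence concretely. The paper's route is shorter, but it leans on two cited structural facts: grade equals height, and height-one primes are principal.

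\textbf{One cleanup.} The sentence beginning ``Equivalently, $\pi$ is a nonzerodivisor on $\mathcal R/(\pi')$\ldots'' is garbled. The relevant module for extending a regular element $a\in I$ is $\mathcal R/(a)$, not $\mathcal R/(\pi)$. The correct version writes $a=\pi a'$ and $b=\pi b'$, which gives $ba'=ab'\in(a)$ with $a'\notin(a)$. Since the height argument already finishes that direction, simply delete the sentence.
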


\begin{proof}
	The ring $\mathcal R=\F[\phi_1,\ldots,\phi_D]$ is a polynomial ring over a field.
	We recall the needed algebraic properties.
	A ring is Noetherian if all of its ideals are finitely generated; by Hilbert's basis theorem, polynomial rings over a field are Noetherian~\cite{eisenbud1995commutative}.
	A ring is regular if each localization at a prime ideal is a regular local ring, meaning that the maximal ideal can be generated by exactly the Krull dimension many elements~\cite{eisenbud1995commutative}.
	Polynomial rings over a field are regular, for example because they are smooth affine algebras over a field~\cite{eisenbud1995commutative}.
	A Cohen--Macaulay ring is one in which depth equals Krull dimension after localization at every prime, and every regular local ring is Cohen--Macaulay~\cite{eisenbud1995commutative}.
	Thus $\mathcal R$ is Cohen--Macaulay.
	Finally, a unique factorization domain (UFD) is a domain with 
	unique factorization into irreducibles, and $\mathcal R$ is a UFD by Gauss's lemma since it is a polynomial ring over the field $\F$~\cite{eisenbud1995commutative}.
	Let $I=I_n(F)$.
	Since $\rank(F)=n$, at least one $n\times n$ minor is nonzero, so $I\ne0$.
	In a Cohen--Macaulay ring, $\grade I=\operatorname{height} I$~\cite{eisenbud1995commutative}.
	Thus $\grade I\ge2$ is equivalent to saying that $I$ is contained in no height-one prime ideal.
	Because $\mathcal R$ is a UFD, every height-one prime ideal is generated by an irreducible polynomial $q$~\cite{eisenbud1995commutative}.
	If all generators of $I$ share such a factor $q$, then $I\subseteq(q)$ and $\operatorname{height}I\le1$.
	Conversely, if $\operatorname{height}I<2$, then $I$ lies in a height-one prime $(q)$, so $q$ divides every $n\times n$ minor.
	This proves the equivalence.
\end{proof}

\subsection{Anyon counting}
\label{subsec:anyon-counting}

\subsubsection{Infinite lattices}
\label{subsubsec:infinite-lattice-sectors}

The quotient modules below record equivalence classes in the syndrome module.
Two syndromes are equivalent if they differ by the boundary of a finite-support Pauli operator.
For standard local translation-invariant stabilizer models, these quotient data 
count independent anyon types~\cite{haah2013commuting,liang2025generalized}.
For space-group codes, we adopt the lifted module representation, in which
the syndrome maps are given by $\Lambda_X,\Lambda_Z:\mathcal R^{2n}\to\mathcal R^n$.
The anyons are then classified by the cokernels:
\begin{equation}
	\coker(\Lambda_X),
	\qquad
	\coker(\Lambda_Z).
	\label{eq:coker-sectors}
\end{equation}
If both cokernels have finite dimension as $\F$-vector spaces, 
the total number of independent anyon species---which 
also equals the number of logical qubits---is
\begin{equation}
	k=\dim_{\F}\coker(\Lambda_X)+
	\dim_{\F}\coker(\Lambda_Z).
	\label{eq:sector-number}
\end{equation}
Note that these two vector-space dimensions are algebraic dimensions of polynomial-module quotients.
They are computed by a module Gr\"{o}bner basis: after reducing by the submodule generated 
by the columns of $\Lambda_X$ or $\Lambda_Z$, the remaining standard module monomials form an $\F$-basis~\cite{greuel2008singular}.
Appendix~\ref{app:groebner-quotients} gives the quotient-ring version and its module extension.

\subsubsection{Finite periodic lattices}
\label{subsubsec:finite-periodic-logical-qubits}

This subsection explains how the module calculation on the infinite lattice descends to a finite periodic system.  
Algebraically, finite size is imposed by replacing the Laurent ring $\mathcal R_0$ with the quotient
\begin{equation}
	\mathcal R_{0,L}=\mathcal R_0/I_L ,
	\label{eq:finite-periodic-ring}
\end{equation}
where $I_L$ is generated by the periodic boundary relations.  
For example, an $L_x\times L_y$ periodic system has
\begin{equation}
	I_L=\langle x^{L_x}+1,\ y^{L_y}+1\rangle .
	\label{eq:finite-periodic-ideal}
\end{equation}

We first record the descent condition for the point-group action.  
Let $g\in P$ act on $\mathcal R_0$.  
If it descends to the quotient $\mathcal R_{0,L}$, the descended map must have the form
\begin{equation}
	\bar g([f])=[g(f)],
	\label{eq:descended-point-action}
\end{equation}
where $[f]$ denotes the class of $f$ modulo $I_L$.

\begin{proposition}[Finite descent condition]
	\label{thm:finite-descent-condition}
	The formula in Eq.~\eqref{eq:descended-point-action} is well defined if and only if
	\begin{equation}
		g(I_L)\subseteq I_L .
		\label{eq:finite-period-stability}
	\end{equation}
\end{proposition}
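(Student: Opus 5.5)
The proof is the standard well-definedness argument for a map induced on a quotient. We use only that $g$ acts on $\mathcal R_0$ as an additive, indeed ring, automorphism; this follows from Eq.~\eqref{eq:point-group-action-r0} and the computation in the proof of Lemma~\ref{thm:invariants-subring}. Well-definedness of Eq.~\eqref{eq:descended-point-action} means: whenever $[f_1]=[f_2]$, i.e.\ $f_1-f_2\in I_L$, we must have $[g(f_1)]=[g(f_2)]$, i.e.\ $g(f_1)-g(f_2)\in I_L$. By additivity, $g(f_1)-g(f_2)=g(f_1-f_2)$, so the condition becomes: $g(d)\in I_L$ for every $d\in I_L$. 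That is exactly $g(I_L)\subseteq I_L$.

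\emph{Sufficiency.} Assume $g(I_L)\subseteq I_L$ and take representatives $f_1,f_2$ of the same class. Then $f_1-f_2\in I_L$, hence $g(f_1)-g(f_2)=g(f_1-f_2)\in g(I_L)\subseteq I_L$, so $[g(f_1)]=[g(f_2)]$ and $\bar g$ is well defined.

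\emph{Necessity.} Assume the formula is well defined and take any $d\in I_L$. Then $[d]=[0]$, so well-definedness gives $[g(d)]=[g(0)]=[0]$. Hence $g(d)\in I_L$, i.e.\ $g(I_L)\subseteq I_L$.

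As a remark, not needed for the equivalence, one can add that when the condition holds, $\bar g$ inherits additivity and multiplicativity from $g$, and it is bijective with inverse $\overline{g^{-1}}$. This last point needs $g^{-1}(I_L)\subseteq I_L$. Since $g$ has finite order $m$, $g^{-1}=g^{m-1}$, and iterating the inclusion gives $g^{m-1}(I_L)\subseteq I_L$. One might also check the criterion on generators: because $g$ is a ring automorphism, it suffices to test $g(x^{L_x}+1), g(y^{L_y}+1)\in I_L$. For example, $s_x(x^{L_x}+1)=x^{-L_x}+1=x^{-L_x}(1+x^{L_x})\in I_L$.

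I expect no step to be a real obstacle. The only point needing care is stating that additivity of $g$ is what reduces well-definedness to the single containment $g(I_L)\subseteq I_L$.
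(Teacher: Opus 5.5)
Your proof is correct and matches the paper's argument. Both reduce well-definedness to requiring $g(f)-g(f')\in I_L$ whenever $f-f'\in I_L$, then use additivity of $g$ to rewrite this as $g(f-f')\in I_L$, which is exactly $g(I_L)\subseteq I_L$; your explicit split into sufficiency and necessity, and the remarks on bijectivity and checking generators, are harmless extras.
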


\begin{proof}
	Let $f,f'\in\mathcal R_0$.  
	The equality $[f]=[f']$ is equivalent to $f-f'\in I_L$.  
	The formula in Eq.~\eqref{eq:descended-point-action} is well defined precisely when $[g(f)]=[g(f')]$ for every such pair.  
	This is equivalent to
	\begin{equation}
		g(f)-g(f')\in I_L.
	\end{equation}
	Since $g$ is a ring homomorphism,
	\begin{equation}
		g(f)-g(f')=g(f-f').
	\end{equation}
	Thus the condition is that $g(h)\in I_L$ for every $h\in I_L$, which is exactly Eq.~\eqref{eq:finite-period-stability}.  
\end{proof}

We now express the period relations in the folded module coordinates.  
Let
\begin{equation}
	\theta:\mathcal R^n\xrightarrow{\cong}\mathcal R_0
\end{equation}
be the fixed $\mathcal R$-module identification.  
The pullback of the period ideal is the submodule
\begin{equation}
	\theta^{-1}(I_L)\subseteq\mathcal R^n .
\end{equation}
Choose a presentation matrix $M_L$ such that
\begin{equation}
	\operatorname{im}M_L=\theta^{-1}(I_L).
	\label{eq:finite-period-module}
\end{equation}
Then the finite folded module is
\begin{equation}
	\mathcal R_L^n=\mathcal R^n/\operatorname{im}M_L \cong\mathcal R_{0,L}.
	\label{eq:finite-folded-module}
\end{equation}
The matrix $M_L$ can be written directly from the generators of the period ideal:
\begin{equation}
	M_L=
	\bigl(
	\Lambda(t_1^{L_1}+1)\mid
	\cdots\mid
	\Lambda(t_D^{L_D}+1)
	\bigr).
	\label{eq:period-presentation-matrix}
\end{equation}
The infinite check maps
$
	\Lambda_X$ and $\Lambda_Z
$
descend to finite maps
\begin{equation}
	\Lambda_{X,L},\Lambda_{Z,L}:(\mathcal R_L^n)^2\to\mathcal R_L^n,
\end{equation}
when they are compatible with the period relations.

\begin{theorem}[Finite periodic logical count]
	\label{thm:finite-periodic-logical-count}
	Assume that the finite maps $\Lambda_{X,L}$ and $\Lambda_{Z,L}$ are well defined.  
	Then the number of logical qubits is
	\begin{equation}
		k_L=\dim_{\F}\frac{\mathcal R^n}{\operatorname{im}M_L+\operatorname{im}\Lambda_X}+\dim_{\F}\frac{\mathcal R^n}{\operatorname{im}M_L+\operatorname{im}\Lambda_Z}.
		\label{eq:finite-periodic-logical-count}
	\end{equation}
\end{theorem}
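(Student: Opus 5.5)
We will reduce the statement to ordinary finite-dimensional linear algebra over $\F$. For the finite periodic code, the number of logical qubits satisfies $k_L=n_{\rm phys}-\rank H_{X,L}-\rank H_{Z,L}$, where $n_{\rm phys}=2N$ and $N=\dim_\F\mathcal R_{0,L}$. The check maps here are $H_{X,L},H_{Z,L}:\mathcal R_{0,L}^2\to\mathcal R_{0,L}$. Since $\rank H=N-\dim_\F\coker H$ for a map into an $N$-dimensional space, this gives
\begin{equation}
k_L=\dim_\F\coker H_{X,L}+\dim_\F\coker H_{Z,L}.
\end{equation}
This is the exact finite analogue of Eq.~\eqref{eq:sector-number}. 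No exactness or topological hypothesis is needed, because $\mathcal R_{0,L}$ is finite dimensional.

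Next we transport these cokernels to the folded module. By Eq.~\eqref{eq:finite-folded-module}, $\theta$ induces an isomorphism $\bar\theta:\mathcal R^n/\operatorname{im}M_L\to\mathcal R_{0,L}$. We then check that $\Lambda_{X,L}$ is conjugate to $H_{X,L}$ under $\bar\theta$, and the same for $Z$. This follows from Eq.~\eqref{eq:lambda-gamma-conjugacy}, $\theta\Lambda(q)=\Gamma(q)\theta$, applied to $f_X$ and $h_X$. It also uses the well-definedness hypothesis: every group element appearing in $f_X,h_X,f_Z,h_Z$ preserves $I_L$, as in Proposition~\ref{thm:finite-descent-condition}. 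Translations preserve $I_L$ automatically because $I_L$ is an ideal, so only the point-group elements need this hypothesis. Consequently
\begin{equation}
\coker H_{X,L}\cong \frac{\mathcal R^n/\operatorname{im}M_L}{\operatorname{im}\Lambda_{X,L}}.
\end{equation}

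The final step identifies this iterated quotient with $\mathcal R^n/(\operatorname{im}M_L+\operatorname{im}\Lambda_X)$. The image of $\Lambda_{X,L}$ on $(\mathcal R^n/\operatorname{im}M_L)^2$ is the image of $\operatorname{im}\Lambda_X\subseteq\mathcal R^n$ under the projection. Indeed, $\Lambda_{X,L}$ is induced from $\Lambda_X$, and every element of the source lifts to $\mathcal R^{2n}$. The third isomorphism theorem then gives $(\mathcal R^n/\operatorname{im}M_L)/\pi(\operatorname{im}\Lambda_X)\cong\mathcal R^n/(\operatorname{im}M_L+\operatorname{im}\Lambda_X)$. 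Doing the same for $Z$ and summing yields Eq.~\eqref{eq:finite-periodic-logical-count}.

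The main obstacle is the bookkeeping in the first step. One must make sure that the $Z$-checks of the finite code are really given by $H_{Z,L}$ in the stated orientation, as a map $\mathcal R_{0,L}^2\to\mathcal R_{0,L}$ rather than its formal transpose. One must also make sure that rank over $\F$ equals $N-\dim\coker$, even when the finite code has redundant checks. Since $H_X$ and $H_Z$ are just matrices with possibly dependent rows, the formula $k=n-\rank H_X-\rank H_Z$ is the standard CSS count and handles the redundancy. The cokernel dimension uses $\rank$ of exactly this matrix, so the argument goes through. A secondary point is verifying that $\operatorname{im}M_L$ from Eq.~\eqref{eq:period-presentation-matrix} equals $\theta^{-1}(I_L)$. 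The columns of $M_L$ generate $\theta^{-1}(I_L)$ as an $\mathcal R$-module only if $\theta^{-1}(I_L)$ is closed under the lifted action of the point group. We would argue this by again using $g(I_L)\subseteq I_L$: $I_L$ is the $\mathcal R_0$-span of the generators $t_i^{L_i}+1$, and $\mathcal R_0$ is spanned over $\mathcal R$ by $\theta(e_j)$. If needed, we would simply take Eq.~\eqref{eq:finite-period-module} as the definition of $M_L$.
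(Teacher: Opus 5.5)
Your proposal is correct and follows the paper's route: the core step is the same identification $\operatorname{im}\Lambda_{X,L}=(\operatorname{im}\Lambda_X+\operatorname{im}M_L)/\operatorname{im}M_L$ followed by the third isomorphism theorem. You additionally justify the starting identity $k_L=\dim_{\F}\coker\Lambda_{X,L}+\dim_{\F}\coker\Lambda_{Z,L}$, via $k_L=2N-\rank H_{X,L}-\rank H_{Z,L}$ and the $\bar\theta$-conjugacy, which the paper simply asserts.

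One correction to your side remark: $\operatorname{im}M_L=\theta^{-1}(I_L)$ holds without any point-group invariance of $I_L$. The reason is that $\Lambda(t_i^{L_i}+1)$ represents multiplication by $x_i^{L_i}+1$, whose image $(x_i^{L_i}+1)\mathcal R_0$ is spanned over $\mathcal R$ by the images of the basis elements $\theta(e_j)$. This is exactly the fallback argument you sketch, and it needs no extra hypothesis.
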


\begin{proof}
    The number of logical qubits is given by
	\begin{equation}
		k_L=\dim_{\F}\frac{\mathcal R_L^n}{\operatorname{im}\Lambda_{X,L}}+\dim_{\F}\frac{\mathcal R_L^n}{\operatorname{im}\Lambda_{Z,L}}.
		\label{eq:finite-logical-cokernel}
	\end{equation} 
	By the submodule correspondence theorem for quotient modules~\cite{atiyah2018introduction}, we have
	\begin{equation}
		\operatorname{im}\Lambda_{X,L}=\frac{\operatorname{im}\Lambda_X+\operatorname{im}M_L}{\operatorname{im}M_L}.
		\label{eq:finite-x-image}
	\end{equation}
	Hence
	\begin{equation}
		\frac{\mathcal R_L^n}{\operatorname{im}\Lambda_{X,L}}=\frac{\mathcal R^n/\operatorname{im}M_L}{(\operatorname{im}\Lambda_X+\operatorname{im}M_L)/\operatorname{im}M_L}\cong
		\frac{\mathcal R^n}{\operatorname{im}M_L+\operatorname{im}\Lambda_X},
		\label{eq:finite-x-quotient}
	\end{equation}
    and similarly for the $Z$ case.
    In the last step of derivation, we use the third isomorphism theorem~\cite{atiyah2018introduction}.
    Eq.~(\ref{eq:finite-periodic-logical-count}) thus holds.
\end{proof}

Equation~\eqref{eq:finite-periodic-logical-count} is the form used for the module Gr\"{o}bner-basis computation.  
The periodic relations and the check image generate a single submodule of $\mathcal R^n$, and the remaining standard module monomials give the required $\F$-dimension.

\subsection{Space-group symmetries and logical gates by relabeling}
\label{subsec:sg-symmetry-logical-gates}

In this subsection, we show how
translation symmetry is violated in space-group codes and discuss 
non-translation symmetries possessed by these codes 
and logical operations induced by such symmetries.

To discuss symmetry in a stabilizer code, we view a code 
as a commuting stabilizer Hamiltonian by assigning each Hamiltonian term 
to each chosen stabilizer generator with the coefficient being $-1$.  
For a single space-group template of the form
\begin{equation}\label{eq:fh}
(f^T \mid h^T),\quad (h \mid f),
\end{equation}
the $X$-type and $Z$-type stabilizer generators are given by
\begin{equation}\label{eq:stabilizergenerators}
(fp\mid hp),\quad (h^T p\mid f^Tp),
\end{equation}
where $p$ ranges over the monomials of $\mathcal{R}_0$. One may notice that 
compared with Eq.~(\ref{eq:fh}), there is an extra transpose operation in 
Eq.~(\ref{eq:stabilizergenerators}). This is because we adopt the 
standard matrix representation defined in Eq.~(\ref{eq:permutation-matrix-entry}), 
whereas other studies define the matrix representation using the transposed form. 

A space-group element $g\in G$ is a stabilizer Hamiltonian symmetry 
when it permutes the Hamiltonian terms. Equivalently, the transformed 
generator multiset must be the original generator multiset, that is,
\begin{equation}
\begin{split}
&\{(gfp\mid g hp):p\}=\{(fp\mid hp):p\}, \\
&\{(gh^T p\mid gf^T p):p\}=\{(h^T p\mid f^T p):p\}.
\end{split}
\end{equation}
A convenient algebraic test is the existence of a common right shift $r_X\in G$
and $r_Z\in G$ such that
\begin{equation}
	\begin{split}
gfg^{-1}=fr_X,&\quad ghg^{-1}=hr_X, \\
gh^Tg^{-1}=h^Tr_Z,&\quad gf^Tg^{-1}=f^Tr_Z.
\end{split}
\end{equation}
The special case $r_X=r_Z=1$ gives the stronger template-level sufficient condition
\begin{equation}
	\begin{split}
gfg^{-1}=f,&\quad ghg^{-1}=h, \\
gf^Tg^{-1}=f^T,&\quad gh^Tg^{-1}=h^T.
\end{split}
\end{equation}

We therefore clearly see that the BB codes and their generalized version
have translation invariance as $f,h,f^T,h^T$ contain only translations.
In contrast, a general space-group code contains point-group operations, 
and translations need not commute with those operations.  
Thus a space-group code is generally not translation invariant, even when it has topological order.  
This broadens the usual translation-invariant viewpoint on topological stabilizer phases.
Moreover, space-group codes can nevertheless have more general symmetries.  
In the example of Subsec.~\ref{subsec:guiding-example}, the non-translation 
element $g=s_xs_yt_y^{-1}$ is a symmetry of the stabilizer Hamiltonian.  
In this example, the action of $g$ conjugates the template 
terms pairwise and leaves the Hamiltonian invariant.

In the example, the code has $k=8$ logical qubits with
$4$ independent $e$-type anyons and $4$ independent $m$-type anyons.  
In the lifted module description, one may choose the $e$-sector basis as
\begin{equation}
	e_1=[1],\quad e_2=[x],\quad e_3=[y],\quad e_4=[xy],
\end{equation}
and similarly
\begin{equation}
	m_1=[1],\quad m_2=[x],\quad m_3=[y],\quad m_4=[xy].
\end{equation}
The symmetry $g=s_xs_yt_y^{-1}$ acts by
\begin{equation}
	e_1\leftrightarrow e_3,\qquad e_2\leftrightarrow e_4,
\end{equation}
and
\begin{equation}
	m_1\leftrightarrow m_3,\qquad m_2\leftrightarrow m_4.
\end{equation}
In matrix form, on each four-dimensional sector quotient,
after choosing a compatible finite logical basis, 
this sector action appears as a SWAP gate on four pairs of logical qubits.

From Subsec.~\ref{subsec:operator-group}, we know that the action of $g$ 
induces a permutation on the physical qubit labels. For example,
$g (xy)=s_xs_yt_y^{-1} (xy)=x^{-1}$, indicating that a qubit initially labeled by $(1,1)$ is sent 
to $(-1,0)$, effecting a relabeling of the corresponding qubits.
Such a permutation can, in principle, be realized by physically moving or 
swapping qubits. Alternatively, it may be viewed purely as a formal relabeling operation, 
in which case no physical quantum operation is required; one simply updates the qubit labels, 
decoder labels, and readout conventions consistently. This relabeling directly implements the 
SWAP gate depicted above. More generally, these permutations provide a mechanism for 
automorphism-induced logical gates, analogous to the construction of logical gates via 
qubit relabeling~\cite{koh2026entangling}.

\section{Folded locality in AOD movement} 
\label{sec:movement-illustration}

\begin{figure*}[t]
\centering
\includegraphics[width=\textwidth]{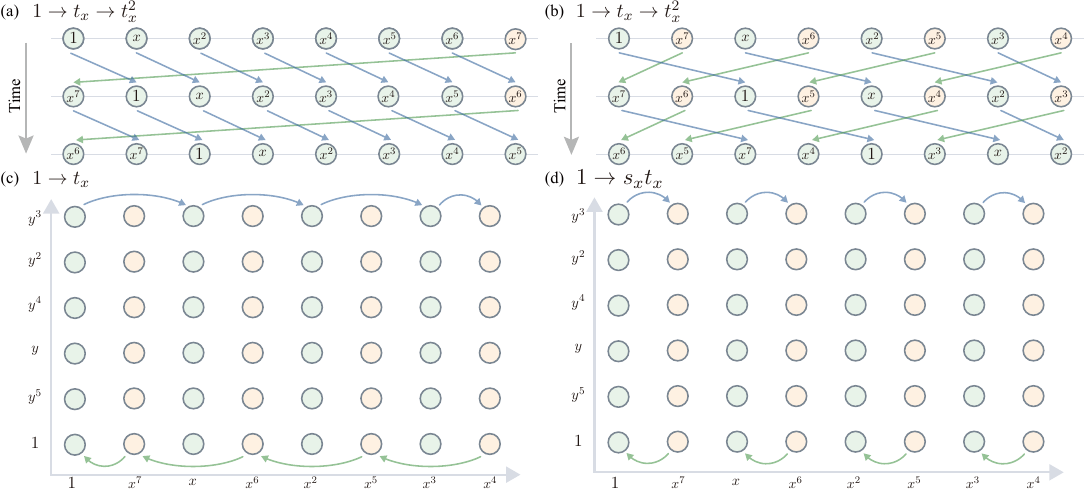}
\caption{
AOD movement protocols from folded locality.
Arrows indicate the induced motion of check qubits between consecutive syndrome-extraction substeps.
Different colored arrows need to be completed in different substeps to avoid collisions.
(a) In a one-dimensional periodic order, the transition $1\to t_x\to t_x^2$ contains a boundary wrap move whose length scales with the system size.
(b) A folded order places boundary labels next to each other, converting the wrap move into bounded-distance AOD substeps.
(c) In two dimensions, folding both coordinates gives a folded $x^a y^b$-labeled check array. 
The $t_x$ transition decomposes into right-moving and left-moving column batches.
(d) $s_x t_x$ corresponds to local swaps of columns of atoms.
}
\label{fig:movement}
\end{figure*}

In Subsec.~\ref{subsec:sg-locality-main}, we discuss the locality of space-group codes from both geometric and algebraic viewpoints, with particular emphasis on the folded locality of reflection codes. 
The key observation there is that, after lattice cells are reorganized 
according to reflection orbits, couplings that appear to cross the periodic boundary in the original lattice can become bounded-range couplings on the folded lattice. 
We now explain how this folded locality can translate into an experimental advantage for syndrome extraction in reconfigurable neutral atom arrays~\cite{bluvstein2024logical,tan2022qubit,wang2024atomique,evered2023high,bluvstein2026fault,radnaev2025universal,reichardt2024fault,cong2022hardware}.

We consider syndrome extraction with three types of physical qubits: data qubits, $X$-check qubits, and $Z$-check qubits. 
The data qubits store the protected quantum information, while the $X$- and $Z$-check qubits serve as ancillas for measuring $X$- and $Z$-type stabilizers. 
On a neutral atom platform, measuring one stabilizer typically requires moving the corresponding check qubit close to each data qubit in the stabilizer support, applying an entangling gate such as a Rydberg-mediated CZ or CNOT gate, and then measuring the check qubit to read out the syndrome. 
Since a full quantum error correction (QEC) round measures many stabilizers, the relevant question is how efficiently many check qubits can be moved in parallel.

An AOD provides a natural mechanism for such parallel motion.
It can translate many optical tweezers simultaneously, for example by moving multiple rows or columns of atoms in the horizontal or vertical direction. 
This parallelism is an important resource for qLDPC code syndrome extraction in neutral atom arrays. 
At the same time, AOD motion is geometrically constrained: simultaneously executed paths should not overlap, cross, or lead to collisions. 
When different moving groups would cross, the motion must be decomposed into several substeps~\cite{xu2024constant,zhao2026towards}.

We use the following coarse-grained movement model. Data qubits are fixed in a middle layer, 
while $X$-check and $Z$-check qubits occupy two movable layers above 
and below the data layer (in practice, these qubits may all be placed within the same layer). 
During syndrome extraction, the check layers are rearranged by AOD motion so that each check qubit visits the cells corresponding to its target data qubits. 
For a given AOD substep, the movement time is controlled by the largest displacement among the rows or columns moved in that substep. 
Parking sites, trap transfer, and pulse-level scheduling are left to a lower-level hardware compiler.

As in the algebraic construction, we label lattice cells by monomials in $\mathcal{R}_0$. 
In one dimension the cells are labeled by $x^a$, while in two dimensions they are labeled by $x^a y^b$.
In the two-block CSS representation, each cell contains one data qubit from each data block, and also has a standard position for one $X$-check qubit and one $Z$-check qubit. 
Since the data layer is fixed, the syndrome-extraction movement problem reduces to a rearrangement problem for the check layers. 
If the $X$-check terms are $f_X=g_1^{-1}+g_2^{-1},\quad h_X=g_3^{-1}+g_4^{-1}$,
where $g_1$, $g_2$, $g_3$, and $g_4$ are elements in a space group, then an $X$-check qubit must visit the corresponding monomial positions in sequence. 
Equivalently, the relative layer transitions are determined by
\begin{equation}
g_1,\quad g_2 g_1^{-1},\quad g_3 g_2^{-1},\quad g_4 g_3^{-1},
\end{equation}
up to the chosen ordering convention for the syndrome-extraction circuit. 
The same discussion applies to $Z$-checks.

Figure~\ref{fig:movement} illustrates the basic mechanism. 
In Fig.~\ref{fig:movement}(a), we take $L=8$ and first consider a one-dimensional translation layout. 
The cells are arranged in the ordinary periodic order $1,\ x,\ x^2,\ldots,\ x^7 $.
For the sequence $1\to t_x\to t_x^2$, most check qubits move by one lattice spacing under each $t_x$ transition. 
However, the qubit at the periodic boundary must move from the right edge back to the left edge. 
Thus every $t_x$ layer contains a wrap-around move whose length scales with $L$.
Although these moves can be parallelized, the longest displacement in the substep grows with the system size, consuming an increasing fraction of the QEC cycle time.

Figure~\ref{fig:movement}(b) shows the same sequence after a folded rearrangement. 
The cells are ordered as $1,\ x^7,\ x,\ x^6,\ x^2,\ x^5,\ x^3,\ x^4$.
This folded order places the two sides of the periodic boundary next to each other. 
As a result, the apparent wrap-around transition is converted into a bounded-distance motion. 
To avoid path crossings, the $t_x$ transition can be split into two muted batches of short moves.
Crucially, after this splitting the maximum displacement in each substep is $O(1)$, independent of $L$.
The same idea extends coordinatewise to two dimensions. 
In Fig.~\ref{fig:movement}(c), cells are labeled by $x^a y^b$, and the $x$-and $y$-coordinates are folded independently. 
A translation $t_x:x^a y^b\mapsto x^{a+1}y^b$ acts columnwise on the folded array. 
Some columns move to the right and are drawn with arrows above the array; the remaining columns move to the left and are drawn with arrows below the array.
Thus the two-dimensional folded layout is the product of two one-dimensional foldings: the periodic cost in the $x$-direction is reduced by folding the $x$-coordinate, and similarly for the $y$-direction.

For ordinary BB codes, the available monomials are pure translations $t_x^a t_y^b$. 
Folded layouts can reduce the movement cost of such translation layers, but they do not add new local algebraic operations to the stabilizer template. 
Reflection codes enlarge this space. 
As shown in Fig.~\ref{fig:movement}(d), a glide reflection term such as $s_x t_x$ 
acts locally on the folded labels: it exchanges nearby folded partners rather than moving a qubit across the full periodic system. 
Therefore reflection and glide reflection terms can be implemented as combinations of folded-local movement primitives. 
This is the experimental counterpart of the code-native folded locality discussed in Subsec.~\ref{subsec:sg-locality-main}: 
reflection terms are additional local degrees of freedom available to the code construction.

\section{Numerical results}
\label{sec:numerical-results}

\subsection{Performance of topological space-group codes during size scaling}

We now present numerical results in Fig.~\ref{fig:size-scaling} comparing 
reflection-code families that pass the algebraic topological-order test with those that fail it.
Specifically, we consider three two-dimensional topological reflection codes and three non-topological reflection codes.
Figure~\ref{fig:size-scaling}(a) shows the finite-size logical-qubit number for the topological codes.
As $L$ varies, $k_L$ oscillates,
sometimes reaching the corresponding infinite-size upper bound and never exceeding it, 
in agreement with the infinite-size cokernel-sector calculation.
The plotted values are obtained from the direct rank computation of the finite binary check matrices.
As an independent check, we also applied the finite-size module Gr\"{o}bner method of 
Theorem~\ref{thm:finite-periodic-logical-count}; for the tested square sizes, 
it gives exactly the same $k_L$ as the direct binary-rank calculation.
Figure~\ref{fig:size-scaling}(b) shows the corresponding code distances.
The distance grows with $L$ with an oscillatory component superposed 
on an approximately linear trend over the tested square sizes.
The observed distance behavior is consistent with topological codes: local periodicity 
can change the shortest logical representatives at special sizes, 
while the overall scale of the logical distance increases with the linear system size in these examples.

Figure~\ref{fig:size-scaling}(c)(d) shows the contrasting behavior of the non-topological codes.
Their logical-qubit numbers grow with the linear system size, again with parity-dependent oscillations, rather than staying below a fixed sector-counting bound.
At the same time, their distances remain small over the tested range.

\begin{figure}[t]
\centering
\includegraphics[width=\columnwidth]{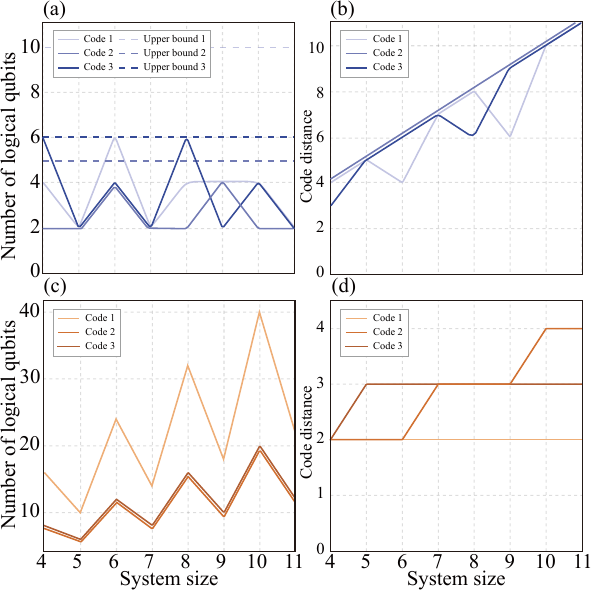}
\caption{
Numerical scaling of code properties with system size $L$ (including $2L^2$ physical qubits)
for (a)--(b) topological reflection codes and (c)--(d) non-topological reflection codes.
(a),(c) The number of logical qubits with respect to $L$;
The dashed horizontal lines in (a) indicate the upper bounds obtained from 
the infinite-size cokernel-sector calculation based on Eq.~(\ref{eq:sector-number}).
(b),(d) The code distance as a function of $L$.
For the topological codes in (a) and (b), the three data sets correspond to the following choices of 
$(f,h)$ from Eq.~(\ref{eq:fxfzhxhz}):
$(1+t_y, 1+t_xt_y^2+s_xt_y+t_x^3)$,
$(1+t_y, 1+t_x+s_xt_y+s_xt_x^2t_y^2)$, and
$(1+t_y, 1+t_x+s_xt_y+s_xt_x^3t_y)$.
For the non-topological codes in (c) and (d), the three choices are
$(t_y+t_y^{-1},t_xt_y^{-1}+t_x^{-2}+s_xt_x^{-1}t_y+s_xt_x^2t_y^2)$,
$(t_y+t_y^{-1},s_xs_yt_x^{-2}+s_xs_yt_x^{-2}t_y+s_yt_y^{-1}+s_yt_y^{-2})$, and
$(t_y+t_y^{-1},s_xs_yt_xt_y^2+s_xs_yt_x^{-1}t_y+s_yt_x+s_yt_x^{-1}t_y^{-1})$.
}
\label{fig:size-scaling}
\end{figure}

\subsection{Codes with outstanding code parameters}

Extending BB codes to space-group codes substantially enlarges the accessible code family.
The most immediate benefit is the appearance of new codes whose parameters outperform those of BB codes with the same number $n$ of physical qubits.
Following the usual convention, we use $\eta=\frac{k d^2}{n}$ as the figure of merit for the code parameters.
We perform numerical searches for space-group codes with stabilizer weights 6 and 8, with representative results summarized in
Table~\ref{tab:twenty-standard-sg-codes-matched-bb}.
All codes in the table use
\begin{equation} \label{eq:fxfzhxhz}
f_X=f^T,\qquad h_X=h^T,\qquad f_Z=h,\qquad h_Z=f.
\end{equation}
Boldface identifies strict improvements over the explicitly reported same-$n$, same-weight BB benchmarks in Refs.~\cite{liang2025generalized,liang2025self}.
Every code distance reported in the table was certified exactly using Boolean satisfiability problem (SAT) solving.

\begin{table*}[htbp]
\caption{
Our certified space-group codes of stabilizer weight 6 or 8, sorted by blocklength $n$.
The columns $f$ and $h$ are the two support polynomials defining $H_X$ by $(f^T,h^T)$ and $H_Z$ by $(h,f)$.
The rectangular point group $C_{2v}=\{e,s_x,s_y,s_xs_y\}$ consists of the identity, 
two perpendicular mirror reflections $s_x,s_y$, and their product, a $180^\circ$ rotation.
For the $n=96$ row, the reflection operation $s$ transforms $(x,y)$ to $(x,-x-y)\pmod{(24,2)}$.
Every displayed distance was determined exactly by Boolean satisfiability problem (SAT) solving. 
A bold code has strictly larger integer $kd^2$ than the explicitly reported same-$n$, 
same-stabilizer-weight row in the specified primary baseline~\cite{liang2025generalized,liang2025self}.}
\label{tab:twenty-standard-sg-codes-matched-bb}
\scriptsize
\setlength{\tabcolsep}{2pt}
\renewcommand{\arraystretch}{1.3}
\begin{ruledtabular}
\begin{tabular}{ccccc}
$[[n,k,d]]$ & group & $f$ & $h$ & $kd^2/n$\\
\hline
$[[12,8,2]]$ & $(\mathbb Z_{2}\times\mathbb Z_{3})\rtimes C_s$ & $s_yt_y+s_yt_y^2+t_xs_yt_y+t_xs_yt_y^2$ & $1+t_y^2+t_x+t_xt_y^2$ & $2.7$\\
$[[14,6,2]]$ & $\mathbb Z_{7}\rtimes C_s$ & $s_x+s_x t_x^{2}+s_x t_x^{6}$ & $s_x t_x^{3}+s_x t_x^{4}+s_x t_x^{6}$ & $1.7$\\
$[[14,8,2]]$ & $\mathbb Z_{7}\rtimes C_s$ & $s_x+s_x t_x+s_x t_x^2+s_x t_x^4$ & $s_x+s_x t_x+s_x t_x^2+s_x t_x^4$ & $2.3$\\
$[[18,4,4]]$ & $(\mathbb Z_{3}\times\mathbb Z_{3})\rtimes C_s$ & $s_xt_x+s_xt_x^{2}+s_xt_x^{2}t_y$ & $s_xt_y+s_xt_y^{2}+s_xt_xt_y^{2}$ & $3.6$\\
$[[18,6,3]]$ & $(\mathbb Z_{3}\times\mathbb Z_{3})\rtimes C_{2v}$ & $s_y+t_xs_y+t_xs_yt_y+t_x^2s_yt_y$ & $s_x+s_xt_y^2+s_xt_x+s_xt_xt_y$ & $3.0$\\
$[[24,4,4]]$ & $(\mathbb Z_{3}\times\mathbb Z_{4})\rtimes C_{2v}$ & $s_yt_y+s_x+s_xt_x^{2}s_y$ & $s_yt_y^{3}+s_x+s_xt_x^{2}s_y$ & $2.7$\\
$[[\textbf{24,10,4}]]$ & $(\mathbb Z_{6}\times\mathbb Z_{2})\rtimes C_s$ & $s_x+s_xt_y+s_xt_x^2+s_xt_x^5t_y$ & $s_xt_x+s_xt_xt_y+s_xt_x^2t_y+s_xt_x^5t_y$ & $6.7$\\
$[[28,6,4]]$ & $(\mathbb Z_{7}\times\mathbb Z_{2})\rtimes C_s$ & $s_xt_y+s_xt_x^{2}+s_xt_x^{3}t_y$ & $s_xt_xt_y+s_xt_x^{3}+s_xt_x^{4}$ & $3.4$\\
$[[\textbf{30,10,4}]]$ & $(\mathbb Z_{5}\times\mathbb Z_{3})\rtimes C_s$ & $s_yt_y+s_yt_y^2+t_xs_yt_y+t_x^2s_y$ & $t_xs_y+t_x^2s_y+t_x^2s_yt_y^2+t_x^3s_y$ & $5.3$\\
$[[\textbf{32,6,6}]]$ & $\mathbb Z_{16}\rtimes C_s$ & $t_x^{7} s_x+t_x^{8} s_x+t_x^{9}s_x+t_x^{14}s_x$ & $t_x s_x+t_x^{11}s_x+t_x^{12}s_x+t_x^{14}s_x$ & $6.8$\\
$[[36,4,6]]$ & $(\mathbb Z_{3}\times\mathbb Z_{6})\rtimes C_{2v}$ & $t_xs_y+s_xs_yt_y+s_xt_xs_yt_y$ & $t_x+s_xt_y^{5}+s_xt_xt_y$ & $4.0$\\
$[[\textbf{40,14,4}]]$ & $(\mathbb Z_{4}\times\mathbb Z_{5})\rtimes C_{2v}$ & $t_x^2t_y^3+t_y^2s_x+t_xs_xs_y+t_xt_ys_xs_y$ & $t_x^2t_y^3+t_y^2s_x+t_x^3s_xs_y+t_x^3t_ys_xs_y$ & $5.6$\\
$[[42,6,6]]$ & $(\mathbb Z_{7}\times\mathbb Z_{3})\rtimes C_s$ & $t_y+t_x^{4}s_yt_y^{2}+t_x^{5}$ & $t_x^{2}s_yt_y^{2}+t_x^{3}+t_x^{5}t_y^{2}$ & $5.1$\\
$[[48,4,8]]$ & $(\mathbb Z_{6}\times\mathbb Z_{4})\rtimes C_{2v}$ & $s_xt_y+s_xt_xs_yt_y^{2}+s_xt_x^{2}$ & $s_xt_y+s_xt_x^{2}t_y^{3}+s_xt_x^{4}$ & $5.3$\\
$[[\textbf{50,18,4}]]$ & $(\mathbb Z_{5}\times\mathbb Z_{5})\rtimes C_s$ & $s_xt_y^3+s_xt_y^4+s_xt_xt_y^2+s_xt_x^4$ & $s_x+s_xt_y^3+s_xt_x^2t_y+s_xt_x^3t_y^2$ & $5.8$\\
$[[60,8,6]]$ & $(\mathbb Z_{6}\times\mathbb Z_{5})\rtimes C_s$ & $1+t_y^{3}+t_xs_yt_y^{4}$ & $t_y^{2}+t_xs_yt_y^{4}+t_x^{4}$ & $4.8$\\
$[[70,6,8]]$ & $(\mathbb Z_{5}\times\mathbb Z_{7})\rtimes C_s$ & $t_x^{3}t_y^{6}+s_xt_xt_y+s_xt_x^{2}t_y^{2}$ & $t_x^{2}t_y^{6}+s_xt_xt_y+s_xt_x^{2}t_y^{2}$ & $5.5$\\
$[[\textbf{84,10,9}]]$ & $(\mathbb Z_{14}\times\mathbb Z_{3})\rtimes C_s$ & $t_x^6t_y^2+t_x^7+t_x^8t_y+t_xt_y^2s_y$ & $t_x+t_x^{12}t_y+t_x^{12}t_y^2+t_x^{13}$ & $9.6$\\
$[[\textbf{96,8,10}]]$ & $(\mathbb Z_{24}\times\mathbb Z_{2})\rtimes C_s$ & $1+t_x+s t_x^{17}t_y$ & $1+t_x^{2}t_y+t_x^{22}$ & $8.3$\\
$[[112,6,12]]$ & $(\mathbb Z_7\times\mathbb Z_4)\rtimes C_s$ & $1+t_x^6s_y+t_x^4t_y^2s_y$ & $1+t_x^4t_y^3+t_x^6t_y^3 s_y$ & $7.7$\\
$[[\textbf{112,16,10}]]$ & $(\mathbb Z_7\times\mathbb Z_4)\rtimes C_s$ & $1+t_x^2t_y s_y+t_x^4 s_y+t_x^3t_y^2 s_y$ & $1+t_xt_y^2 s_y+t_x^4t_y+t_x^6t_y^3$ & $14.3$\\
$[[\textbf{120,14,12}]]$ & $(\mathbb Z_{15}\times\mathbb Z_4)\rtimes C_s$ & $s_y+t_x^7t_y^2+t_x^{10}t_y^3 s_y+t_x^{12}$ & $t_x^2+t_x^7+t_x^8t_y^3 s_y+t_x^{13}t_y^2 s_y$ & $16.8$\\
$[[\textbf{132,8,11}]]$ & $(\mathbb Z_{22}\times\mathbb Z_3)\rtimes C_s$ & $1+t_x^{13} s_y+t_x^{13}t_y s_y+t_x^{16}t_y$ & $1+t_x^{20}$ & $7.3$\\
$[[150,8,12]]$ & $(\mathbb Z_{15}\times\mathbb Z_5)\rtimes C_s$ & $t_y^3+t_x^3+t_x^9t_y^2 s_y$ & $t_x^5t_y^2+t_x^6+t_x^{14}t_y^2 s_y$ & $7.7$\\
$[[\textbf{156,8,12}]]$ & $(\mathbb Z_{26}\times\mathbb Z_3)\rtimes C_s$ & $1+t_x^{24}t_y^2+t_x^9 s_y+t_x^3 t_y^2 s_y$ & $1+t_x^2$ & $7.4$\\
$[[\textbf{160,26,8}]]$ & $(\mathbb Z_{20}\times\mathbb Z_4)\rtimes C_s$ & $t_xt_y^3+t_x^{16}s_y+t_x^{14}+t_x^{19}t_y^2$ & $t_xs_y+t_x^3+t_x^6t_y+t_x^8$ & $10.4$\\
\end{tabular}
\end{ruledtabular}
\end{table*}

\subsection{Advantage in AOD movement}
\label{sec:folded-aod-movement-results}

In this subsection, we demonstrate how the abstract locality advantage of space-group codes can be converted into a scheduling advantage for syndrome extraction on reconfigurable neutral-atom arrays. 
Specifically, we numerically compare the AOD movement required by parameter-matched BB and reflection codes under the folded two-AOD model introduced in Sec.~\ref{sec:movement-illustration}. 
The paired codes have identical finite-code parameters $[[n,k,d]]$. 
Across several system sizes, the reflection codes admit syndrome-extraction schedules with shorter cumulative movement
than the optimized BB comparators.

We assume space-group codes and BB codes have normal form as Eq.~(\ref{eq:fxfzhxhz}).
Writing
\begin{equation}
f=f_1+f_2+f_3,
\qquad
h=h_1+h_2+h_3,
\end{equation}
each check-ancilla qubit visits six data qubits in each CSS sector. 
A representative cyclic ordering is
\begin{equation}
f_1\longrightarrow f_2\longrightarrow f_3
\longrightarrow h_1\longrightarrow h_2\longrightarrow h_3
\longrightarrow f_1.
\label{eq:six-term-cyclic-aod-schedule}
\end{equation}
The final arrow is the round-closure transition and is included in the movement cost. 
For each code, we consider all possible cyclic ordering and find the one
with minimum movement cost.

Specifically, for a transition from a group element $u$ to a 
group element $v$, one needs to realize the relative operation $vu^{-1}$ 
by moving check arrays. 
The arrays can be displaced in parallel along the $x$ direction or the $y$ direction, 
while avoiding any crossings during the motion.
The cost of such a parallel movement is determined by the longest distance 
traveled by any single array. We enumerate all valid movement sequences 
and selects the one that minimizes this longest distance; the resulting minimum cost is denoted by
$W(vu^{-1})$, expressed in line-slot displacement units.
A full syndrome-extraction cycle corresponds to a cyclic ordering of the six elements.
Consider an order
\begin{equation}
	v^{X}_1\longrightarrow v^{X}_2 \longrightarrow v^{X}_3
	\longrightarrow v^{X}_4 \longrightarrow v^{X}_5 \longrightarrow v^{X}_6
	\longrightarrow v^{X}_1,
\end{equation}
where $v^{X}_t$ with $t=1,2,\dots,6$ belong to the set $\{f_1,f_2,f_3,h_1,h_2,h_3\}$,
and similarly for $v^{Z}_t$. 
For a given cyclic order, we define the cumulative array movement over a syndrome-extraction cycle as
\begin{equation}
L_{\mathrm{AOD}}=\sum_{\sigma\in\{X,Z\}}\sum_{t=1}^{6}W(v^{\sigma}_{t+1}(v^{\sigma}_{t})^{-1}),
\label{eq:folded-two-aod-movement}
\end{equation}
where $v^{\sigma}_7 \equiv v^{\sigma}_1$.
The minimum movement cost $L_{\mathrm{m}}$ is then obtained by minimizing $L_{\mathrm{AOD}}$
over all possible cyclic orderings.

Figure~\ref{fig:folded-aod-movement-comparison} shows the reductions for the reflection codes 
at $n=48,56,96,108,$ and $132$ compared with the corresponding BB codes. 
The largest reduction occurs at $n=108$: the reflection code 
requires $L_{\mathrm{m}}=80$, whereas the optimized BB comparator requires 
$L_{\mathrm{m}}=200$. 
The corresponding reduction places the reflection-code movement below one half of the BB value. 
The group-algebra polynomials for all five pairs are listed in Table~\ref{tab:folded-aod-code-pairs},
and the BB codes are taken from Ref.~\cite{wang2026coprime}.

\begin{figure}[t]
\centering
\includegraphics[
width=0.45\textwidth
]{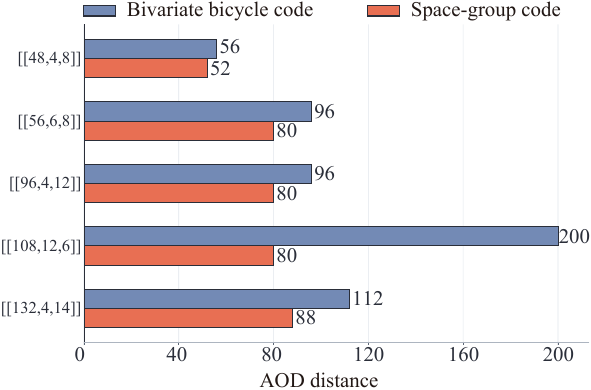}
\caption{
Folded two-AOD movement for five parameter-matched reflection-code and BB-code pairs.
Each pair has identical exact parameters $[[n,k,d]]$. 
The horizontal bars represent the minimum movement cost $L_{\mathrm{m}}$ 
during a syndrome-extraction cycle. 
}
\label{fig:folded-aod-movement-comparison}
\end{figure}

\begin{table}[t]
\caption{
Group-algebra representatives for the five matched code pairs containing one reflection code and one BB code, which appear in Fig.~\ref{fig:folded-aod-movement-comparison}.
}
\label{tab:folded-aod-code-pairs}
\scriptsize
\setlength{\tabcolsep}{2pt}
\renewcommand{\arraystretch}{1.3}
\begin{tabular}{cccc}
\hline
Code & Shape & $f$ & $h$\\
\hline
$[[48,4,8]]$ & $3\times8$ & $1+t_xt_y^6+t_x^2t_y^6$ & $1+t_x^2t_y^6+s_xt_y^7$ \\
$[[48,4,8]]$ & $3\times8$ & $1+t_xt_y+t_x^2t_y^2$ & $1+t_xt_y^2+t_x^2t_y^2$ \\
\hline
$[[56,6,8]]$ & $4\times7$ & $s_xt_y^4+s_xt_x^3t_y+s_xt_x^3t_y^3$ & $t_y+t_x^2t_y^3+t_x^2t_y^4$\\
$[[56,6,8]]$ & $4\times7$ & $t_y+t_x+t_x^3t_y^3$ & $1+t_xt_y+t_x^3t_y^3$ \\
\hline
$[[96,4,12]]$ & $6\times8$ & $1+t_x^2t_y+t_x^4t_y$ & $1+t_x^2t_y^2+s_xt_xt_y$\\
$[[96,4,12]]$ & $6\times8$ & $t_y^2+t_xt_y^7+t_x^5t_y^2$ & $1+t_x^4t_y+t_x^5t_y^2$\\
\hline
$[[108,12,6]]$ & $6\times9$ & $1+t_y+s_xt_x^3t_y^5$ & $t_y+s_xt_x^3t_y^2+s_xt_x^3t_y^3$\\
$[[108,12,6]]$ & $2\times27$ & $1+t_y^{24}+t_xt_y^{21}$ & $1+t_xt_y^{15}+t_xt_y^{21}$\\
\hline
$[[132,4,14]]$ & $6\times11$ & $1+t_x^4t_y^3+s_xt_x^5$ & $t_y+t_x^4t_y^3+s_xt_x^5$ \\
$[[132,4,14]]$ & $6\times11$ & $1+t_x+t_x^2t_y^{10}$ & $1+t_xt_y^2+t_x^2t_y^4$\\
\hline
\end{tabular}
\end{table}

A lower value of $L_{\mathrm{m}}$ represents a smaller cumulative transport burden for the movable check arrays. 
Under fixed line spacing, kinematic constraints, and serialization rules, this reduction can translate into a shorter movement contribution to the syndrome-extraction cycle and less exposure to movement-induced loss and decoherence. 
Reducing the AOD movement distance implies a shorter time for syndrome extraction, which results in a lower physical error rate, thereby extending the quantum coherence time.
More broadly, an expanded numerical search may identify further space-group codes that complement BB codes across movement, code
quality, and hardware-constrained optimization objectives.

\subsection{Advantage in superconducting layout}
\label{sec:sg-superconducting-layouts}

In this subsection, we show how the abstract locality advantage of space-group codes can be converted into a concrete layout advantage on superconducting hardware~\cite{divincenzo2009fault,zhao2022realization,marques2022logical,krinner2022realizing,acharya2023suppressing,google2025quantum,wang2026demonstration}.  
In particular, we find numerically that a lossless folded-lattice placement of the Gross code substantially reduces
its routing cost relative to the layout reported in Ref.~\cite{mathews2026placing}.  
We also construct and evaluate superconducting layouts for two additional reflection codes, $[[48,4,8]]$ and $[[96,6,10]]$.

We first summarize the hardware model used to quantify the layout overhead of a stabilizer code.  
The superconducting platform consists of a stack of chips, each with two routable faces.  
Two mutually facing chip surfaces form a tier.  
Transmons and electromagnetic coupling traces can be placed on a chip surface, with each transmon representing one physical qubit.  
We place all transmons on the upper face of tier~0 and partition them into data transmons and check (ancilla) transmons.  Syndrome extraction requires a coupling path between every check transmon and each data transmon in the support of the corresponding stabilizer.  
Traces on the same chip face are required to be noncrossing.  
To resolve otherwise unavoidable crossings, a route may switch between the two faces of a tier through a bump bond, or
pass between the two faces of the same chip through a through-silicon via (TSV).

Following Ref.~\cite{mathews2026placing}, we characterize a routed layout by four extracted hardware parameters.  
The tier count includes the qubit tier~0.
The quantity length is the normalized average coupler length obtained by 
the Hardware-Aware Layout (HAL) algorithm~\cite{mathews2026placing}, using the shortest allowed tier 0 route as the length unit.
The quantity bumps is the maximum, over all tiers, of the tier-averaged HAL face-switch (bump-transition) score.  
Finally, $V$ is the average number of TSV transitions per edge assigned above tier~0.  
These definitions and the two-decimal display precision are the same as those used in the extracted-hardware-parameter
table of Ref.~\cite{mathews2026placing}.

The HAL algorithm takes a Tanner graph and a set of initial qubit positions as input and 
uses heuristic placement and routing to produce a feasible set of coupler paths across the available tiers~\cite{mathews2026placing}.  
We retain their routing algorithm and hardware settings, but replace the initial positions by our folded-lattice placements
with data and check transmons occupying distinct sites.  
The periodic translation cells are embedded in an open plane using a two-axis accordion fold.  

Figures~\ref{fig:sg-hal-layouts}(k)--(o) show the resulting five-tier routing of the Gross $[[144,12,12]]$ code.  
Using the definitions and display precision of Ref.~\cite{mathews2026placing}, the published Gross-code layout has
$(\text{Tiers},\text{Length},\text{Bumps},\text{TSVs})=(5,11.08,5.06,3.27)$, whereas our folded placement gives $(5,7.09,5.14,3.06)$.  
Thus, the number of tiers is unchanged, the normalized coupler length is reduced by approximately $36.0\%$, and the TSV metric
decreases from $3.27$ to $3.06$, while the maximum tier-averaged face-switch score changes only slightly.

We perform the same calculation for the two reflection codes.
Figures~\ref{fig:sg-hal-layouts}(a)--(e) show the HAL routes for $[[48,4,8]]$, while Figs.~\ref{fig:sg-hal-layouts}(f)--(j) show those for $[[96,6,10]]$.  
Their defining supports and extracted routing overheads are listed in Table~\ref{tab:sg-hal-overheads}.  
Both codes admit five-tier layouts.  
These results provide concrete evidence that the reflection structure of space-group codes, together with an orbifold-inspired folded placement, can be exploited at the hardware-layout level.  
They also provide practical reference points for future implementations of high-performance qLDPC codes on multilayer superconducting circuits.

\begin{figure*}[t]
\centering
\includegraphics[width=1\textwidth]{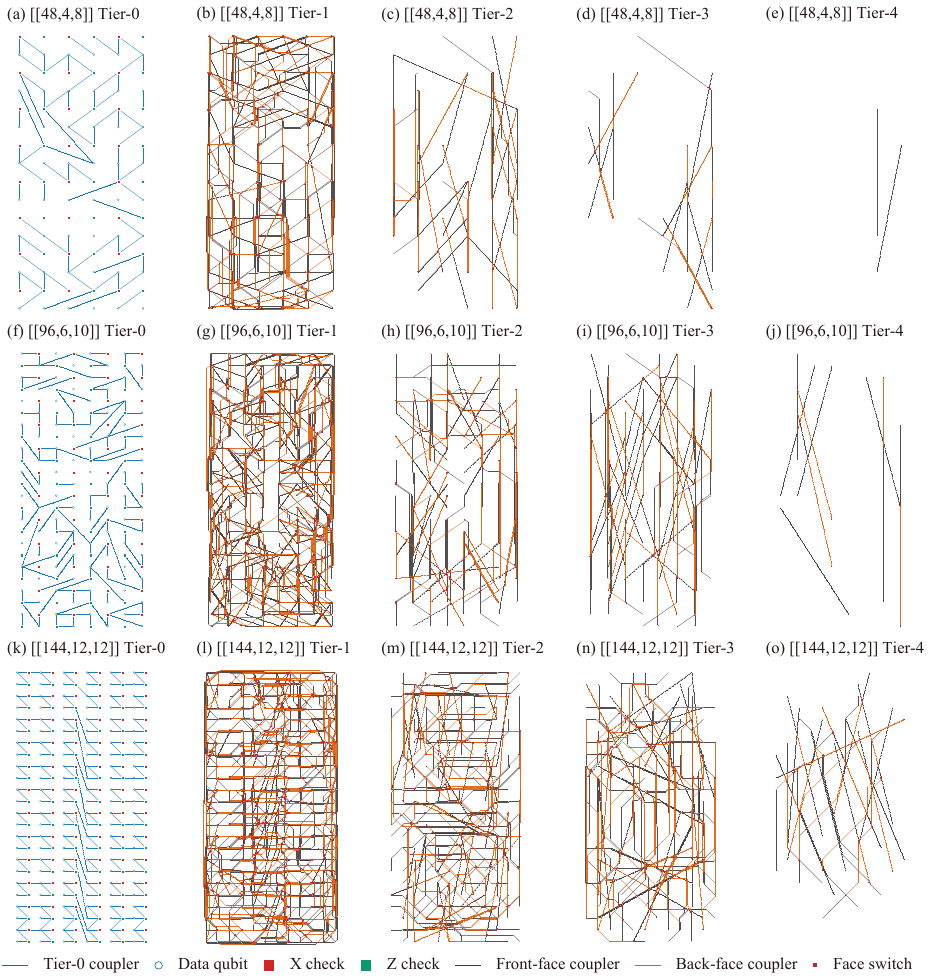}
\caption{
HAL-generated with folded-lattice superconducting layouts for two reflection space-group codes and the Gross code.  
(a)--(e) the $[[48,4,8]]$ space-group code;
(f)--(j) the $[[96,6,10]]$ space-group code; 
(k)--(o) the Gross $[[144,12,12]]$ BB code.
For each code, columns~0--4 show routing tiers~0--4.
In tier~0, blue traces denote couplers, open blue circles denote
data transmons, and the larger red and green squares denote $X$ and $Z$ check transmons, respectively.
In tiers~1--4, black and orange traces denote couplers on the
front and back chip faces, while the small red squares mark face
switches between the two faces.
Each Tanner edge is routed within a single assigned tier; the
vertical TSV access stacks connecting tier~0 to higher routing
tiers are implicit and are not shown.
}
\label{fig:sg-hal-layouts}
\end{figure*}

\begin{table*}[t]
\caption{
Extracted hardware parameters for the folded-lattice layouts, using the definitions and display precision of Ref.~\cite{mathews2026placing}.
Here $\text{Tiers}$ is the total number of tiers including tier~0, $\text{Length}$ is the normalized average coupler length, $\text{Bumps}$ is the maximum tier-averaged HAL face-switch score, and $\text{TSVs}$ is the average number of TSV transitions per edge assigned above tier~0.  
}
\label{tab:sg-hal-overheads}
\renewcommand{\arraystretch}{1.3}
\begin{ruledtabular}
\begin{tabular}{ccccccc}
Code & $f$ & $h$ & Tiers & Length & Bumps & TSVs \\
\hline
$[[48,4,8]]$
& $t_y^3s_xs_y+t_xt_y^3s_xs_y+t_x^2t_ys_xs_y$
&  $t_xt_y^3+t_x^2t_y^5+t_x^2s_x$
& 5 & 4.55 & 3.36 & 2.70 \\
$[[96,6,10]]$
& $t_x^4t_ys_y+t_x^5t_y^3$ 
& $t_ys_y+t_xt_y+t_x^3+t_x^7t_y^2$
& 5 & 6.95 & 3.54 & 2.93 \\
$[[144,12,12]]$
& $t_y^3+t_x+t_x^2$ 
& $t_x^3+t_y+t_y^2$
& 5 & 7.09 & 5.14 & 3.06 \\
\end{tabular}
\end{ruledtabular}
\end{table*}

\section{Discussion}
\label{sec:discussion}

The principal result of this work is the construction and characterization of topological codes beyond ordinary translation invariance.
We show that nontrivial crystallographic point-group operations can enter the bulk stabilizers while preserving both topological order and geometric locality. 
The enlarged search space yields space-group codes that improve upon reported same-number-of-physical-qubits and same-weight BB benchmarks. 
The hardware studies further show that this additional structure can have operational consequences: reflection codes reduce optimized movement costs in neutral-atom arrays, and folded geometries enable practical multilayer superconducting layouts.
Translation symmetry has traditionally played two roles at once: it organizes the stabilizer algebra and provides an immediate notion of locality.
Our framework separates these roles. 
Topological order can be characterized algebraically even when ordinary translation invariance is broken, while locality can instead be expressed through an orbifold geometry. 
Spatial symmetry therefore becomes a design resource. 
The improved folded placement of the Gross BB code further shows that this geometric viewpoint can also provide new implementations of established translation-invariant codes. 
The broader contribution of the present work is thus a new symmetry-based direction for the co-design of topological codes and quantum hardware.

An important next step is to determine more advantages in end-to-end fault-tolerant performance. 
This will require circuit-level noise simulations, hardware-calibrated syndrome-extraction models, and decoders adapted to the reduced translation symmetry of space-group codes~\cite{yin2024symbreak,gu2026scalable}. 
The additional spatial symmetries may also support useful logical operations through qubit relabeling or code automorphisms~\cite{koh2026entangling}. 
On the theoretical side, it will be valuable to extend the construction to nonsymmorphic space groups~\cite{shiozaki2016topology,liu2014topological}, higher-dimensional~\cite{hastings2016quantum,bombin2007exact,hamma2005string,dennis2002topological}, defects-containing~\cite{webster2020fault,brown2020parallelized} and layered systems~\cite{liu2026self}, and multiblock CSS constructions~\cite{lin2026abelian}.

\textit{Note added.} 
In the final stage of completing this work, we became aware of Ref.~\cite{aydin2026breakingbicycleframecosetbased}, which also investigates the construction of qLDPC codes based on groups and their cosets. 
Starting from a related code construction, our work focuses mainly on topological properties of these codes and their compatibility with hardware constraints, whereas Ref.~\cite{aydin2026breakingbicycleframecosetbased} focuses primarily on the search for codes with good parameters, as well as decoding and simulations under a circuit-level error model.

\begin{acknowledgments}
We thank P. Zoller, Y. Nakamura, and Z. Yue for helpful discussions. This work is supported by 
Quantum Science and Technology-National Science and Technology Major Project (Grant No. 2021ZD0301604)
and the National Natural Science Foundation of China (Grant No. 11974201).
We also acknowledge the support by center of high performance computing, Tsinghua University.
\end{acknowledgments}

\appendix

\section{Coset space of a group used for code construction}
\label{app:generate-set}

In the main text, we construct CSS codes from a set $A$ equipped with a permutation action of 
a group $G$.
In this Appendix, we show how this construction can be achieved directly from  
coset spaces of $G$.

It is well known that every $G$-set $A$ can be decomposed into a disjoint union of orbits $O_i$~\cite{gallian2021contemporary}:
\begin{equation}
	A=\bigsqcup_i O_i,\qquad O_i=G a_i=\{g a_i \mid g\in G\},
	\label{eq:g-set-orbit-decomposition}
\end{equation}
where acting each group element on an orbit element sends this orbit 
element to an element in the same orbit.

For any subgroup $H_i$ in $G$, its conjugacy class is a set $S_{H_i}=\{g H_i g^{-1}:g\in G\}$. 
Let $H_i$ be a representative of each conjugacy class of subgroups.
We want to show that every $G$-set $A$ admits a decomposition of the form
\begin{equation}
	A\cong \bigsqcup_{i=1}^m \bigsqcup_{j=0}^{n_i-1} [G/H_i]_j,
	\label{eq:gset-general-decomposition}
\end{equation}
where $[G/H_i]_0=G/H_i$ and for $j>0$
$[G/H_i]_j$ denotes the $j$th copy of the coset space $G/H_i$.

As shown in Subsec.~\ref{subsec:operator-group} of the main text, 
a CSS code can be constructed from four group-algebra 
elements $f_X,h_X,f_Z,h_Z\in\F[G]$ with the check matrices 
given in Eq.~(\ref{eq:gamma-block-maps}).
Relative to the disjoint union of coset spaces in Eq.~(\ref{eq:gset-general-decomposition}),
each element $g\in G$ is represented by a permutation matrix $\Gamma(g)$. In addition,
each coset in Eq.~(\ref{eq:gset-general-decomposition}) corresponds to two physical 
qubits.

Thus, the key input is the decomposition in Eq.~(\ref{eq:gset-general-decomposition}),
which itself follows directly from the following two propositions.
Although their proofs can be found in standard textbooks~\cite{Artin2014}, 
we include them here for completeness and self-consistency. 
We first observe that every subgroup $H\subseteq G$ gives rise to a left coset space 
$G/H$, which carries a natural $G$-set structure: left multiplication by any $g \in G$
permutes the cosets. Indeed, for distinct cosets $g_1 H\neq g_2 H$, their images 
$g (g_1H)$ and $g (g_2 H)$ are distinct. 

\begin{proposition}
	\label{thm:orbit-coset}
	Let $A$ be a $G$-set. Let $O=Ga$ be an orbit with $a\in A$ and let $H=\{g\in G\mid ga=a\}$ be the
	corresponding stabilizer subgroup.
	Then $O$ is isomorphic to the coset space $G/H$ with an isomorphism defined by
	\begin{equation}
		\Psi_a:G/H\longrightarrow O,\qquad \Psi_a(gH)=ga.
		\label{eq:coset-orbit-map}
	\end{equation}
\end{proposition}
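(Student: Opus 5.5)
The plan is to verify directly that $\Psi_a$ is a well-defined bijection of $G$-sets, that is, a $G$-equivariant bijection. The argument is the standard orbit--stabilizer correspondence, and every step reduces to the defining property of the stabilizer $H=\{g\in G\mid ga=a\}$. First I check that $H$ is a subgroup: $ea=a$; if $g,k\in H$ then $(gk)a=g(ka)=ga=a$; and $g^{-1}a=g^{-1}(ga)=a$. This makes $G/H$ meaningful.

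\emph{Well-definedness and injectivity.} These come from a single equivalence. For $g_1,g_2\in G$,
\begin{equation}
g_1H=g_2H \iff g_2^{-1}g_1\in H \iff g_2^{-1}g_1a=a \iff g_1a=g_2a .
\end{equation}
The forward direction shows that $\Psi_a(gH)=ga$ does not depend on the choice of coset representative. The backward direction shows that $\Psi_a$ is injective. The only care needed is to act with $g_2$ or $g_2^{-1}$ on both sides and to use the compatibility axiom $(gk)a=g(ka)$.

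\emph{Surjectivity and equivariance.} Surjectivity is immediate: every element of $O=Ga$ has the form $ga=\Psi_a(gH)$. For equivariance, for any $k\in G$ I compute
\begin{equation}
\Psi_a\bigl(k(gH)\bigr)=\Psi_a\bigl((kg)H\bigr)=(kg)a=k(ga)=k\,\Psi_a(gH).
\end{equation}
Hence $\Psi_a$ intertwines left multiplication on $G/H$ with the action of $G$ on $O$, so it is an isomorphism of $G$-sets.

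No step is a genuine obstacle. The only point needing attention is well-definedness, which is why I organize the proof around the displayed equivalence: it delivers well-definedness and injectivity at once. I would also remark that choosing a different base point $a'=ka$ replaces $H$ by the conjugate $kHk^{-1}$. This is what justifies indexing the decomposition in Eq.~\eqref{eq:gset-general-decomposition} by conjugacy-class representatives $H_i$.
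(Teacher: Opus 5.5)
Your proposal is correct and follows essentially the same route as the paper. Both check directly, from the definition of the stabilizer, that $\Psi_a$ is well defined, surjective, injective and $G$-equivariant. Your single chain of equivalences, which gives well-definedness and injectivity together, and your explicit check that $H$ is a subgroup are only minor streamlinings of the paper's argument.
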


\begin{proof}
	If $gH=g'H$, then $g^{-1}g'\in H$, hence $(g^{-1}g')a=a$ and $g'a=ga$.
	Thus $\Psi_a$ is well-defined.
	$\Psi_a$ is surjective because every element in $O$ is of the form $g a$.
	If $\Psi_a(gH)=\Psi_a(g'H)$, then $ga=g'a$, so $g^{-1}g'a=a$ and $g^{-1}g'\in H$.
	Hence $gH=g'H$ so that $\Psi_a$ is injective.
	$\Psi_a$ is thus bijective.
	For any $k\in G$ and $gH\in G/H$, one has
	\begin{equation}
		\Psi_a(k\cdot(gH))=\Psi_a((kg)H)=(kg)a=k(ga)=k\cdot \Psi_a(gH).
	\end{equation}
	Hence $\Psi_a$ is $G$-equivariant. Thus, $\Psi_a$ is an isomorphism and
	$O\cong G/H$.
\end{proof}

\begin{proposition}
	\label{thm:conjugate-subgroups-gsets}
	For two subgroups $H,K \le G$, the $G$-sets $G/H$ and $G/K$ are isomorphic if and only if $H$ and $K$ are conjugate.
\end{proposition}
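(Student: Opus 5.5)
We prove the two directions separately. Throughout we use Proposition~\ref{thm:orbit-coset}, which identifies a transitive $G$-set with the coset space of the stabilizer of any chosen point. The key auxiliary fact is that stabilizers of points in the same orbit are conjugate: if $b=ga$, then
\begin{equation}
	\{k\in G\mid kb=b\}=g\{k\in G\mid ka=a\}g^{-1}.
\end{equation}
This holds because $kga=ga$ is equivalent to $g^{-1}kg\in \operatorname{Stab}(a)$.

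\emph{($\Leftarrow$)} Suppose $K=gHg^{-1}$. Define
\begin{equation}
	\Phi:G/H\to G/K,\qquad \Phi(xH)=xg^{-1}K.
\end{equation}
\begin{enumerate}
	\item Well-definedness. If $xH=x'H$, then $x'=xh$ with $h\in H$. Hence $x'g^{-1}=xg^{-1}(ghg^{-1})$, and $ghg^{-1}\in K$, so $x'g^{-1}K=xg^{-1}K$.
	\item Bijectivity. The inverse is $yK\mapsto ygH$; it is well defined by the symmetric argument using $H=g^{-1}Kg$.
	\item Equivariance. $\Phi(k\cdot xH)=kxg^{-1}K=k\cdot\Phi(xH)$.
\end{enumerate}
So $\Phi$ is a $G$-set isomorphism.

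\emph{($\Rightarrow$)} Let $\Psi:G/H\to G/K$ be a $G$-equivariant bijection, and write $\Psi(eH)=gK$.
\begin{enumerate}
	\item Equivariance preserves stabilizers: $\operatorname{Stab}(eH)=\operatorname{Stab}(\Psi(eH))$. Indeed, $k\cdot eH=eH$ implies $k\Psi(eH)=\Psi(eH)$, and injectivity of $\Psi$ gives the converse.
	\item In $G/H$ the stabilizer of $eH$ is $H$, since $k H=H$ iff $k\in H$.
	\item In $G/K$ the stabilizer of $gK$ is $gKg^{-1}$, by the auxiliary fact with $gK=g\cdot eK$.
\end{enumerate}
Therefore $H=gKg^{-1}$, i.e., $H$ and $K$ are conjugate.

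The only step needing care is step~1 of the forward direction: the equality of stabilizers, rather than mere inclusion, relies on the injectivity of $\Psi$. Everything else is routine coset bookkeeping.

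Combined with Proposition~\ref{thm:orbit-coset}, this result justifies Eq.~\eqref{eq:gset-general-decomposition}. Decompose $A$ into orbits. Each orbit is isomorphic to $G/\operatorname{Stab}(a_i)$, and by the auxiliary fact the conjugacy class of $\operatorname{Stab}(a_i)$ does not depend on the choice of $a_i$ within the orbit. By the present proposition, each orbit may then be replaced by a copy of $G/H_i$ for the chosen representative $H_i$ of that conjugacy class. Grouping the orbits by conjugacy class and counting the copies gives the multiplicities $n_i$.
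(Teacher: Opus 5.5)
Your proof is correct and follows essentially the same route as the paper: the converse uses the same explicit map $xH\mapsto xg^{-1}K$, and in the forward direction your equality of stabilizers (via injectivity of $\Psi$) repackages the paper's argument, which derives $g^{-1}Kg\subseteq H$ from equivariance and then gets the reverse inclusion by applying the same step to $\Phi^{-1}$. You also check bijectivity of the converse map explicitly, which the paper only asserts.
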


\begin{proof}
	Assume that $\Phi:G/K\to G/H$ is a $G$-set isomorphism and write $\Phi(K)=gH$.
	For any $k\in K$, we have $\Phi(K)=gH=\Phi(k K)=k\Phi(K)=k gH$, yielding that $g^{-1}kg \in H$
	and thus $g^{-1}Kg\subseteq H$.
	Applying the same argument to $\Phi^{-1}$ gives the reverse inclusion, so $g^{-1}Kg=H$.
	Conversely, if $gKg^{-1}=H$, then the map
	\begin{equation}
		h K\longmapsto h g^{-1}H
	\end{equation}
    with $h \in G$ is a well-defined $G$-equivariant bijection from $G/K$ to $G/H$.
\end{proof}

We note that quantum two-block group algebra codes (2BGA) codes~\cite{lin2024quantum}
are constructed from regular representation of a finite group, i.e., choosing a subgroup $H_i=\{e\}$, 
where $e$ is the identity element in the group. In addition, Z semidirect Z (ZSZ) codes~\cite{guo2026toward} 
form a subclass of 2BGA codes. They are obtained by choosing the underlying group to be a 
semidirect product $G=\mathbb Z_\ell\rtimes \mathbb Z_m$.
This gives a non-abelian generalization of the BB construction. 
When the semidirect product action is trivial, the group reduces to the 
direct product $\mathbb Z_\ell\times\mathbb Z_m$.

Multivariate bicycle codes~\cite{voss2025multivariate} are obtained by taking the 
underlying group to be an abelian translation group, for example 
$G=\mathbb Z_{L_1}\times\cdots\times\mathbb Z_{L_D}$ in the finite periodic case. 
Equivalently, the blocks are described by multivariate cyclic polynomials in $D$ commuting variables.
The BB codes are the two-dimensional special case of multivariate bicycle codes.

\section{Basis table for two-dimensional point groups}
\label{app:basis-table}

In this Appendix, we provide the invariant polynomial subrings and free module bases used in the module-lifting step of Subsec.~\ref{subsec:module-lifting}.
For a two-dimensional point group $P$, the Laurent polynomial ring
\begin{equation}
	\mathcal R_0=\F[x^{\pm1},y^{\pm1}]
\end{equation}
is treated as a finite free module over a polynomial subring $\mathcal{R}\subseteq\mathcal R_0^P$.
After a basis $E_P$ is fixed, every translation or point-group generator acts by a finite matrix over $\mathcal{R}$.
This is the input used to build the maps $\Lambda_X$ and $\Lambda_Z$.

\begin{table*}[!htbp]
\caption{
Chosen invariant polynomial subrings and free-module bases for two-dimensional
crystallographic point-group actions.
For each action $P$, the displayed ring satisfies
$\mathcal R\subseteq\mathcal R_0^P$, and
$\mathcal R_0=\bigoplus_{e\in E_P}\mathcal R e$.
The superscripts on $C_s$ and $C_{2v}$ distinguish their integral-lattice embeddings,
as defined in Eqs.~(\ref{eq:point-group-basis-actions}) and
(\ref{eq:schoenflies-point-groups}).
We use
$
	a=x+x^{-1}$,
	$b=y+y^{-1}$,
	$A_2=a+b$,
	$B_2=ab$,
	$s_d=x+y$,
	$p_d=xy$,
	$q_d=p_d+p_d^{-1}$,
	$s_3=x+y+x^{-1}y^{-1}$,
	$p_3=xy+x^{-1}+y^{-1}$,
	$s_6=x+x^{-1}y+y^{-1}$,
	$p_6=y+x^{-1}+xy^{-1}$,
	$A_6=s_6+p_6$, and
	$B_6=s_6p_6$.
For example, for $C_{2v}^{(\mathrm{rect})}$ generated by
$x\mapsto x^{-1}$ and $y\mapsto y^{-1}$, one may take
$\mathcal{R}=\F[a,b]$.
Then the relations $x^2+ax+1=0$ and $y^2+by+1=0$ reduce every Laurent monomial to the basis $\{1,x,y,xy\}$.
}
\label{tab:point-group-basis}
\scriptsize
\renewcommand{\arraystretch}{1.3}
\begin{ruledtabular}
\begin{tabular}{cccc}
point-group action $P$ & $\mathcal{R}$ & basis $E_P$ &
$\operatorname{rank}_{\mathcal R}\mathcal R_0$\\
\hline
$C_s^{(x)}$ & $\F[a,b]$ & $\{1,x,y,xy\}$ & $4$\\
$C_s^{(d)}$ & $\F[s_d,q_d]$ & $\{1,x,xy,x^2y\}$ & $4$\\
$C_2$ & $\F[a,b]$ & $\{1,x,y,xy\}$ & $4$\\
$C_{2v}^{(\mathrm{rect})}$ & $\F[a,b]$ & $\{1,x,y,xy\}$ & $4$\\
$C_{2v}^{(\mathrm{diag})}$ & $\F[A_2,B_2]$ & $\{1,x,y,xy,a,ax,ay,axy\}$ & $8$\\
$C_3$ & $\F[s_3,p_3]$ & $\{1,x,x^2,y,xy,x^2y\}$ & $6$\\
$C_{3v}$ & $\F[s_3,p_3]$ & $\{1,x,x^2,y,xy,x^2y\}$ & $6$\\
$C_4$ & $\F[A_2,B_2]$ & $\{1,x,y,xy,a,ax,ay,axy\}$ & $8$\\
$C_{4v}$ & $\F[A_2,B_2]$ & $\{1,x,y,xy,a,ax,ay,axy\}$ & $8$\\
$C_6$ & $\F[A_6,B_6]$ &
$\{1,x,x^2,y,xy,x^2y,s_6,s_6x,s_6x^2,s_6y,s_6xy,s_6x^2y\}$ & $12$\\
$C_{6v}$ & $\F[A_6,B_6]$ &
$\{1,x,x^2,y,xy,x^2y,s_6,s_6x,s_6x^2,s_6y,s_6xy,s_6x^2y\}$ & $12$\\
\end{tabular}
\end{ruledtabular}
\end{table*}

Table~\ref{tab:point-group-basis} gives one such choice of $\mathcal{R}$ and $E_P$ for each two-dimensional point-group action used in the paper.
The entries are not meant to classify invariant rings uniquely; they provide computationally convenient polynomial subrings over which $\mathcal R_0$ is free.
We use Schoenflies notation: $C_1$ is the trivial point group, $C_n=\langle c_n\rangle$
is generated by an $n$-fold rotation, $C_s$ is generated by a single mirror, and
$C_{nv}$ contains $C_n$ together with $n$ mirror reflections.
Note that $C_s$ and $C_2$ are isomorphic as abstract order-two groups, but the former is
generated by a reflection and the latter by a $180^\circ$ rotation.
The superscripts below distinguish inequivalent integral-lattice embeddings of the same
Schoenflies type; they are embedding labels rather than additional point-group types.
On exponent vectors $(x,y)$, we choose
\begin{align}
s_x(x,y)&=(-x,y),&
s_y(x,y)&=(x,-y),\nonumber\\
s_d(x,y)&=(y,x),&
s_{\bar d}(x,y)&=(-y,-x),\nonumber\\
c_2(x,y)&=(-x,-y),&
c_3(x,y)&=(-y,x-y),\nonumber\\
c_4(x,y)&=(-y,x),&
c_6(x,y)&=(-y,x+y).
\label{eq:point-group-basis-actions}
\end{align}
The groups appearing in the table are
\begin{align}
C_s^{(x)}&=\langle s_x\rangle,&
C_s^{(d)}&=\langle s_d\rangle,\nonumber\\
C_{2v}^{(\mathrm{rect})}&=\langle s_x,s_y\rangle,&
C_{2v}^{(\mathrm{diag})}&=\langle s_d,s_{\bar d}\rangle,\nonumber\\
C_{3v}&=\langle c_3,s_d\rangle,&
C_{4v}&=\langle c_4,s_d\rangle,\nonumber\\
C_{6v}&=\langle c_6,s_d\rangle .
\label{eq:schoenflies-point-groups}
\end{align}
In particular,
$C_{2v}^{(\mathrm{rect})}=\{e,s_x,s_y,c_2\}$, whereas
$C_{2v}^{(\mathrm{diag})}=\{e,s_d,s_{\bar d},c_2\}$.

\section{Gröbner basis for quotient rings and modules}
\label{app:groebner-quotients}

In this Appendix, we review the standard Gröbner-basis algorithm for computing the dimension of a polynomial quotient ring, and then state the corresponding module version~\cite{cox1997ideals,greuel2008singular}.
Let
\begin{equation}
	R=\F[z_1,\ldots,z_m]
\end{equation}
and let $I\subset R$ be an ideal.
Choose a monomial order, namely a total well-order on monomials that is compatible with multiplication.
For a nonzero polynomial
\begin{equation}
	f=\sum_{\alpha} c_{\alpha}z^\alpha ,
\end{equation}
the leading monomial $\operatorname{LM}(f)$ is the largest monomial $z^\beta$ with $c_\beta\ne0$, the leading coefficient is $\operatorname{LC}(f)=c_\beta$, and the leading term is $\operatorname{LT}(f)=c_\beta z^\beta$.
For monomials $z^\alpha$ and $z^\beta$, we say that $z^\alpha$ divides $z^\beta$ if $\alpha_i\le \beta_i$ for every $i$.
In that case the quotient monomial is $z^{\beta-\alpha}$.

For two nonzero polynomials $f,g$, their $S$-polynomial is defined by
\begin{equation}
	S(f,g)=
	\frac{\operatorname{lcm}(\operatorname{LM}(f),\operatorname{LM}(g))}
	{\operatorname{LT}(f)}f
	-
	\frac{\operatorname{lcm}(\operatorname{LM}(f),\operatorname{LM}(g))}
	{\operatorname{LT}(g)}g .
\end{equation}
The purpose of this expression is to cancel the common leading monomial at the least common multiple of the two leading monomials.
Over $\mathbb F_2$, the subtraction is the same as addition.

We also fix the division procedure used below.
To divide a polynomial $h$ by a current list $G=\{g_1,\ldots,g_t\}$, look at the leading term of the current value of $h$.
If some $\operatorname{LM}(g_i)$ divides $\operatorname{LM}(h)$, replace
\begin{equation}
	h
	\longleftarrow
	h-\frac{\operatorname{LT}(h)}{\operatorname{LT}(g_i)}g_i .
\end{equation}
This cancels the current leading term of $h$.
Repeat the same step with the new value of $h$.
If no leading monomial from $G$ divides $\operatorname{LM}(h)$, move $\operatorname{LT}(h)$ to the remainder and remove it from $h$.
The division ends when $h=0$.
Saying that a polynomial reduces to zero means that this division process leaves zero remainder; equivalently, its leading term can be cancelled repeatedly until no term remains outside the span generated by the leading terms of $G$.

Buchberger's algorithm is a loop.
Start with a generating list $G=\{f_1,\ldots,f_s\}$.
For each pair $g_i,g_j\in G$, compute $S(g_i,g_j)$ and divide it by the current list $G$.
If the remainder is nonzero, adjoin this remainder to $G$ and then repeat the same test for the new pairs.
The loop stops when every $S$-polynomial has zero remainder after division by the current list.
By Buchberger's criterion, this final list is a Gröbner basis of $I$~\cite{cox1997ideals,greuel2008singular}.
The process terminates because $R$ is Noetherian~\cite{eisenbud1995commutative}.
One usually performs a final reduction step.
First remove every basis element whose leading monomial is divisible by the leading monomial of another basis element.
In particular, if two basis elements have the same leading monomial, one of them is redundant and must be removed after reduction.
Then reduce each remaining basis element by all the others and normalize leading coefficients to one.
The result is the reduced Gröbner basis.

The key computational property is then that every $f\in R$ has a unique normal form after division by this Gröbner basis~\cite{cox1997ideals,greuel2008singular}.
A monomial $z^\alpha$ is called standard if it is not divisible by any $\operatorname{LM}(g)$ with $g$ in the Gröbner basis.
Let $\mathcal B_I$ be the set of standard monomials.
The residue classes of the monomials in $\mathcal B_I$ form an $\F$-basis of $R/I$~\cite{cox1997ideals,greuel2008singular}.
Therefore, if $\mathcal B_I$ is finite, then
\begin{equation}
	\dim_{\F}R/I=
	\#\mathcal B_I.
	\label{eq:quotient-ring-standard-monomials}
\end{equation}

Here is a small example over $\mathbb F_2$ where the initial generators are not already a Gröbner basis.
Let
\begin{equation}
	R=\mathbb F_2[x,y],
	\qquad
	I=\langle x^2,\ y^2+x^2\rangle .
\end{equation}
Use lexicographic order with $x>y$.
Set
\begin{equation}
	f_1=x^2,\qquad f_2=y^2+x^2 .
\end{equation}
Both generators have leading monomial $x^2$:
\begin{equation}
	\operatorname{LM}(f_1)=x^2,\qquad \operatorname{LM}(f_2)=x^2 .
\end{equation}
Their $S$-polynomial is
\begin{equation}
	S(f_1,f_2)=f_1+f_2=y^2 .
\end{equation}
The monomial $y^2$ is not divisible by the current leading monomial $x^2$, so the remainder is nonzero.
Buchberger's algorithm therefore adjoins
\begin{equation}
	f_3=y^2 .
\end{equation}
Now the leading monomials include $x^2$ and $y^2$.
The remaining $S$-polynomials reduce to zero by division with the current list, so the loop has produced a Gröbner basis.
There is still a final simplification.
The two elements $f_1$ and $f_2$ have the same leading monomial $x^2$; moreover, reducing $f_2$ by $f_1$ gives
\begin{equation}
	f_2+f_1=y^2=f_3 .
\end{equation}
Thus $f_2$ is redundant.
After removing this redundant generator, the reduced Gröbner basis is
\begin{equation}
	G_{\mathrm{red}}=\{x^2,\ y^2\}.
\end{equation}
The standard monomials are therefore
\begin{equation}
	\mathcal B_I=\{1,\ x,\ y,\ xy\}.
\end{equation}
Thus
\begin{equation}
	\dim_{\mathbb F_2}R/I=4.
\end{equation}

The module version is similar.
Let $R^n$ have basis $e_1,\ldots,e_n$, and let $N\subseteq R^n$ be a submodule generated by vectors $g_1,\ldots,g_t$.
A module monomial has the form $z^\alpha e_i$.
To run a Gröbner-basis algorithm, one must first choose an order on these module monomials.
This includes a convention for comparing the basis directions $e_1,\ldots,e_n$, such as a term-over-position or position-over-term order.
For example, in $R^2$ one may choose an order with $e_1>e_2$, so the leading module monomial of $xe_1+e_2$ is $xe_1$.
Divisibility also remembers the basis direction:
\begin{equation}
	z^\alpha e_i \mid z^\beta e_j
	\quad\text{means}\quad
	i=j\ \text{and}\ z^\alpha\mid z^\beta .
\end{equation}
Thus $xe_1$ divides $x^3e_1$ but does not divide $x^3e_2$.
In the module division algorithm, a current leading term can be cancelled by a basis element $g\in G$ only when $\operatorname{LM}(g)$ divides that leading module monomial in this sense.
Having the same polynomial monomial is not enough if the terms lie in different basis components.
The module Buchberger algorithm again computes a Gröbner basis $G_N$ by reducing module $S$-polynomials with this ordered basis and this divisibility rule~\cite{greuel2008singular}.
The leading module monomials $\operatorname{LM}(g)$, $g\in G_N$, determine the standard module monomials: these are the $z^\alpha e_i$ not divisible by any leading module monomial~\cite{greuel2008singular}.

In matrix computations, a submodule is usually given by the columns of a polynomial matrix; computing a module Gröbner basis of this column submodule and counting its standard module monomials gives the vector-space dimension of the corresponding cokernel whenever that dimension is finite~\cite{greuel2008singular}.

%

\end{document}